\newtheorem{lemma}{Lemma}
\newtheorem{theorem}{THEOREM}
\newcommand{\R}{\mathbb{R}}
\renewcommand{\S}{\mathbb{S}}
\newcommand{\C}{\mathbb{C}}
\newcommand{\F}{\mathcal{F}}
\newcommand{\W}{\mathcal{W}}
\newcommand{\gm}{\gamma}
\newcommand{\al}{\alpha}
\newcommand{\half}{\mbox{$\frac 12$}}
\newcommand{\bra}{\langle}
\newcommand{\ket}{\rangle}
\newcommand{\be}{\begin{equation}}
\newcommand{\ee}{\end{equation}}
\newcommand{\bea}{\begin{align}}
\newcommand{\eea}{\end{align}}
\newcommand\de{\mathcal D}
\newcommand\infspec{{\rm{inf\, spec\,}}}
\newcommand\eps\epsilon
\newcommand\V{\mathcal{V}}
\newcommand\B{\mathcal{B}}
\newcommand\dig{\mathfrak{F}}
\DeclareMathOperator{\sgn}{sgn}
\DeclareMathOperator{\const}{const}
\begin{document}

\title%{Mathematical aspects of the BCS equation}
{Spectral properties of the
BCS gap equation of superfluidity}

\thanks{Plenary talk given by C. Hainzl at QMath10, $10^{\rm th}$ Quantum Mathematics
  International Conference, Moeciu, Romania, September 10--15, 2007.
  \\ \indent \copyright\, 2008 by the authors. This work may be reproduced, in
  its entirety, for non-commercial purposes.}

\author{Christian Hainzl} \address{Christian Hainzl, Departments of
  Mathematics and Physics, UAB, 1300 University Blvd,\\ Birmingham AL
  35294, USA}
 \email{hainzl@math.uab.edu}

\author{Robert Seiringer} \address{Robert Seiringer, Department of
  Physics, Princeton University, Princeton NJ 08542-0708, USA}
\email{rseiring@princeton.edu}

\date{Feb. 4, 2008}

\begin{abstract}
  We present a review of recent work on the mathematical aspects of
  the BCS gap equation, covering our results of \cite{HS} as well
  our recent joint work with Hamza and Solovej \cite{HHSS} and with
  Frank and Naboko \cite{FHNS}, respectively. In addition, we mention
  some related new results.
\end{abstract}

\maketitle

\section{Introduction}

In this paper we shall describe our recent mathematical study
 \cite{HHSS,FHNS,HS} of one of the current
hot topics in condensed matter physics, namely ultra cold fermionic
gases consisting of neutral spin-$\frac 12$ atoms. The kinetic energy
of these atoms is described by the non-relativistic Schr\"odinger
operator, and their interaction by a pair potential $\lambda V$ with
$\lambda$ being a coupling parameter. As experimentalists are nowadays
able to vary the inter-atomic potentials, the form of $\lambda V$ in
actual physical systems can be quite general; see the recent reviews
in \cite{Chen} and \cite{zwerger}. Our primary goal
concerns the study of the {\em superfluid phases} of such
systems. According to Bardeen, Cooper and Schrieffer \cite{BCS} (BCS)
the superfluid state is characterized by the existence of a
non-trivial solution of the {\em gap equation}
\begin{equation}\label{bcseintro}
\Delta(p) = -\frac \lambda{(2\pi)^{3/2}} \int_{\R^3} \hat V(p-q)
\frac{\Delta(q)}{E(q)} \tanh \frac{E(q)}{2T} \, dq
\end{equation}
at some temperature $T\geq 0$, with $E(p)= \sqrt{(p^2-\mu)^2 +
  |\Delta(p)|^2}$. Here, $\mu > 0$ is the chemical potential and $\hat
V(p)=(2\pi)^{-3/2}\int_{\R^3} V(x) e^{-ipx} dx $ denotes the Fourier
transform of $V$. The function $\Delta(p)$ is the order parameter and
represents the wavefunction of the {\em Cooper pairs}. Despite the
fact that the BCS equation \eqref{bcseintro} is highly non-linear, we
shall show in Theorem \ref{mthmbcs} (see also \cite[Thm 1]{HHSS}) that
the existence of a non-trivial solution to \eqref{bcseintro} at some
temperature $T$ is equivalent to the fact that a certain {\em linear
  operator}, given in \eqref{linopfinT} below, has a negative
eigenvalue. For $T=0$ this operator is given by $|-\Delta - \mu| +
\lambda V$. This rather astonishing possibility of reducing a
non-linear to a linear problem allows for a more thorough mathematical
study. Using spectral-theoretic methods, we are able to give a precise
characterization of the class of potentials leading to a non-trivial
solution for \eqref{bcseintro}.  In particular, in Theorem
\ref{thm2.2} (see also \cite[Thm 1]{FHNS}) we prove that for all
interaction potentials that create a negative eigenvalue of the
effective potential on the Fermi sphere (see \eqref{defvm} below; a
sufficient condition for this property is that $\int_{\R^3} V(x) dx <
0$), there exists a {\it critical temperature} $T_c(\lambda V) > 0$
such that \eqref{bcseintro} has a non-trivial (i.e., not identically
vanishing) solution for all $T < T_c(\lambda V)$, whereas there is no such 
solution for $T \geq T_c(\lambda V)$. Additionally, we shall determine
in Theorem~\ref{thm2.2} the precise asymptotic behavior of
$T_c(\lambda V)$ in the small coupling limit. We extend this result in
Theorem \ref{constant} (see also \cite[Thm 1]{HS}) and give a
derivation of the critical temperature $T_c$ valid to {\it second order} 
Born approximation. More precisely, we shall show that
\begin{equation}\label{form:tc}
T_c = \mu \frac{8 e^{\gamma-2}}{\pi} e^{\pi/(2 \sqrt{\mu} b_\mu)}
\end{equation}
where $\gamma\approx 0.577$ denotes
Euler's constant, and where $b_\mu<0$ is an effective scattering length. To
first order in the Born approximation, $b_\mu$ is related to the
scattering amplitude of particles with momenta on the Fermi sphere,
but to second order the expression is more complicated. The precise
formula is given in Eq.~(\ref{bmexpl}) below.  For interaction
potentials that decay fast enough at large distances, we shall show
that $b_\mu$ reduces to the usual {\it scattering length} $a_0$ of the
interaction potential in the low density limit, i.e.,
for small $\mu$. Our formula thus represents a {\em generalization} of a well-known formula in the physics literature \cite{gorkov,NS}.\\

In the case of zero temperature, the function $E(p)$ in
\eqref{bcseintro} describes an effective energy-momentum relation for
quasi particles, and $$\Xi : = \inf_{p} E(p) = \inf_p \sqrt{(p^2 -
  \mu)^2 + |\Delta(p)|^2}$$ is called the {\em energy gap} of the
system. It is of major importance for applications, such as the
classification of different types of superfluids. In fact, $\Xi$ is the
spectral gap of the corresponding second quantized BCS Hamiltonian. (See
\cite{BCS} and \cite{MR} or the appendix in \cite{HHSS}.)

An important problem is the classification of potentials $V$ for which
$\Xi > 0$. This questions turns out to be intimately related to the
continuity of the momentum distribution $\gamma(p)$, which will be
introduced in the next section.  In the normal (i.e., not superfluid)
state, $\Delta = 0$ and $\gamma$ is a step function at $T=0$, namely
$\gamma(p) = \theta(|p| - \sqrt{\mu})$. According to the picture
presented in standard textbooks the appearance of a superfluid phase
softens this step function and $\gamma(p)$ becomes continuous. We are
going to prove in this paper that if $ V(x)|x| \in L^{6/5}$ and $\int
V < 0$ then indeed both strict positivity of $\Xi > 0$ and continuity
$\gamma$ hold.  It remains an open problem to find examples of
potentials such that the gap vanishes in cases where a superfluid
phase occurs.

One of the difficulties involved in evaluating $\Xi$ is the potential
non-uniqueness of the solution of the BCS gap equation.  For
interaction potentials that have nonpositive Fourier transform, however, we
shall show that the BCS pair wavefunction {\it is} unique, and has
zero angular momentum. In this case, we shall prove in Theorem \ref{gap}
(see also \cite[Thm. 2]{HS}) similar results for $\Xi$ as for the
critical temperature.  It turns out that, at least up to second order
Born approximation,
\begin{equation}\label{form:xi}
\Xi = T_c \frac \pi{e^\gamma}
\end{equation}
in this case. This equality is valid for any density, i.e., for any
value of the chemical potential $\mu$. In particular, $\Xi$ has
exactly the same exponential dependence on the interaction
potential, described by $b_\mu$, as the critical temperature $T_c$.

\section{Preliminaries and main results}

We consider a gas of spin $1/2$ fermions at temperature $T\geq 0$
and chemical potential $\mu > 0$, interacting via a local
two-body interaction potential of the form $2\lambda V(x)$. Here,
$\lambda>0$ is a coupling parameter, and the factor $2$ is
introduced for convenience. We assume that $V$ is real-valued and
has some mild regularity properties, namely $V\in L^1(\R^3)\cap
L^{3/2}(\R^3)$. In the BCS approximation, the system is described by
the {\it BCS functional} $\F_T$, derived by Leggett in his seminal
paper \cite{Leg}, based on the original work of BCS \cite{BCS}. The BCS functional
$\F_T$ is related to the {\em pressure} of the system and is given
by
\begin{equation}\label{freeenergy}
\F_T(\gm,\al)= \int (p^2-\mu)\gm(p)dp+\int |\alpha(x)|^2 V(x)dx -T
S(\gm,\al),
\end{equation}
where the entropy $S$ is $$ S(\gamma,\alpha) = - \int {\rm
Tr}_{\C^2}\left[\Gamma(p)
  \log \Gamma(p)\right]dp, \qquad \Gamma(p) = \mbox{$\left(\begin{matrix}
\gamma(p) & \hat \alpha(p) \\ \overline{\hat \alpha(p)} & 1- \gamma(p)\\
\end{matrix}\right)$}.$$  The functions $\gamma(p)$ and $\hat\alpha(p)$ are interpreted as the momentum distribution and the Cooper pair wave function, respectively. The satisfy the matrix constraint $0\leq \Gamma(p)\leq 1$ for all $p\in \R^3$. 
In terms of the BCS functional the occurrence of superfluidity is
described by minimizers with $\alpha \neq 0$. We remark that in the
case of the Hubbard-model this functional was studied in \cite{BLS}.

For an arbitrary temperature $0 \leq T < \infty$ the BCS
gap equation, which is the Euler-Lagrange equation associated with
the functional $\F_T$, reads
\begin{equation}\label{bcse}
\Delta(p) = -\frac \lambda{(2\pi)^{3/2}} \int_{\R^3} \hat V(p-q)
\frac{\Delta(q)}{E(q)} \tanh \frac{E(q)}{2T} \, dq,
\end{equation}
where $E(p)= \sqrt{(p^2-\mu)^2 + |\Delta(p)|^2}$. The order
parameter $\Delta$ is related to the expectation value of the Cooper
pairs $\alpha$ via $2\alpha(p)= \Delta(p)/E(p)$. We present in the
following a thorough mathematical study of this equation. In order to do so, we shall not attack the equation \eqref{bcse}
directly, but exploit the fact that $\alpha$ is a critical point of
the semi-bounded functional $\F_T$.

The key to our studies is the observation in \cite{HHSS} that the
existence of a non-trivial solution to the {\em non-linear} equation
\eqref{bcse} can be reduced to a {\em linear} criterion, which can
be formulated as follows.

 \begin{theorem}[{\bf \cite[Theorem 1]{HHSS}}] \label{mthmbcs}
 Let $V \in L^{3/2}$, $\mu \in \R$, and $\infty > T \geq 0$. Define
 $$K_{T,\mu} = (p^2 -
\mu)\frac{e^{(p^2-\mu)/T}+1}{e^{(p^2-\mu)/T}-1}\, .$$
 Then
 the non-linear BCS equation \eqref{bcse} has a non-trivial solution
 if  and only if the
 linear operator
\begin{equation}\label{linopfinT}
K_{T,\mu} +  \lambda V \,,
\end{equation}
acting on $L^2(\R^3)$, has at least one negative eigenvalue.
\end{theorem}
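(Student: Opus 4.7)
The plan is to exploit the variational structure. Since \eqref{bcse} is the Euler--Lagrange equation of the BCS functional $\F_T$, every non-trivial solution of \eqref{bcse} corresponds to a non-trivial critical point of $\F_T$, while the ``normal state'' $(\gamma_0,0)$ with $\gamma_0(p)=(1+e^{(p^2-\mu)/T})^{-1}$ is always a stationary point. The strategy is to translate the existence of a non-trivial solution into the statement that $(\gamma_0,0)$ is not a minimiser of $\F_T$, and then identify the Hessian of $\F_T$ at $(\gamma_0,0)$ in the $\alpha$-direction with the quadratic form of $K_{T,\mu}+\lambda V$. The algebraic bridge is the identity
$$\frac{p^2-\mu}{\tanh\bigl((p^2-\mu)/(2T)\bigr)}=K_{T,\mu}(p),$$
which is precisely the second derivative of the Fermi--Dirac entropy evaluated at $\gamma_0$.

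For the ``only if'' direction, given a non-trivial $\Delta$ I would set $\hat\alpha(p):=\Delta(p)\tanh(E(p)/(2T))/(2E(p))$, so that \eqref{bcse} becomes the linear identity $(K_\Delta+\lambda V)\alpha=0$, where $K_\Delta(p):=E(p)\coth(E(p)/(2T))$ acts as a Fourier multiplier. Using that $x\mapsto x\coth(x/(2T))$ is strictly increasing in $|x|$ and that $E(p)\geq|p^2-\mu|$ with strict inequality exactly where $\Delta(p)\neq 0$, one obtains $K_\Delta\geq K_{T,\mu}$ pointwise, with strict inequality on a set of positive measure. Consequently
$$\langle\alpha,(K_{T,\mu}+\lambda V)\alpha\rangle<\langle\alpha,(K_\Delta+\lambda V)\alpha\rangle=0,$$
so the min--max principle forces $K_{T,\mu}+\lambda V$ to have negative spectrum. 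Because the essential spectrum of $K_{T,\mu}$ is $[2T,\infty)\subset[0,\infty)$ and $V\in L^{3/2}(\R^3)$ is relatively form-compact with respect to $K_{T,\mu}$, Weyl's theorem preserves the essential spectrum, so this negative spectrum is discrete and consists of eigenvalues.

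For the converse, take an eigenfunction $\alpha_0$ of $K_{T,\mu}+\lambda V$ associated with the negative eigenvalue and build the trial state $\Gamma_t$ whose off-diagonal entry is $t\hat\alpha_0$ and whose diagonal entry is the $\gamma$ that minimises $\F_T$ at this fixed off-diagonal part. Diagonalising $\Gamma_t(p)$ yields eigenvalues $\tfrac12\pm\sqrt{(\gamma(p)-\tfrac12)^2+t^2|\hat\alpha_0(p)|^2}$, and expanding the entropy to order $t^2$ around $(\gamma_0,0)$ produces
$$\F_T(\Gamma_t)-\F_T(\gamma_0,0)=t^2\langle\alpha_0,(K_{T,\mu}+\lambda V)\alpha_0\rangle+o(t^2)<0$$
for sufficiently small $t>0$. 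Therefore $(\gamma_0,0)$ is not a minimiser. The matrix constraint $0\leq\Gamma\leq 1$ keeps $\F_T$ bounded below, and weak lower-semicontinuity supplies a minimiser; this minimiser has $\alpha\not\equiv 0$, and its Euler--Lagrange equation is exactly \eqref{bcse}.

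The main obstacle is the second-order entropy expansion in the third step: one must perturb both diagonal and off-diagonal parts of $\Gamma$ simultaneously and verify, via an envelope-theorem argument using that $\gamma_0$ minimises $\F_T(\cdot,0)$, that the $\delta\gamma$--$\hat\alpha$ cross terms drop out so that the coefficient of $|\hat\alpha_0(p)|^2$ is precisely $K_{T,\mu}(p)$ and nothing else. The bookkeeping is eased by noting that the eigenvalues of $\Gamma$ depend on $(\gamma,\hat\alpha)$ only through the combination $(\gamma-\tfrac12)^2+|\hat\alpha|^2$.
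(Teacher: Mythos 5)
Your proposal is correct and follows essentially the same route as the paper: one direction via the second variation of $\F_T$ at the normal state $(\gamma_0,0)$, whose Hessian in the $\alpha$-direction is the quadratic form of $K_{T,\mu}+\lambda V$, and the other via the pointwise inequality $E(p)/\tanh(E(p)/2T)\geq K_{T,\mu}(p)$ (strict where $\Delta\neq0$), which is exactly the paper's comparison of $(p^2-\mu)/(\tfrac12-\gamma)$ with $(p^2-\mu)/(\tfrac12-\gamma_0)$ rewritten as a Fourier multiplier. Your packaging of the hard direction through $K_\Delta=E\coth(E/2T)$ and $(K_\Delta+\lambda V)\alpha=0$ is a slightly cleaner presentation of the same argument, not a different proof.
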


Hence we are able to relate a non-linear problem to a linear problem
which is much easier to handle. The operator $K_{T,\mu}$ is understood as a
multiplication operator in momentum space. In the limit $T \to 0$
this operator reduces to $ |-\Delta - \mu| + \lambda V$.

\subsection{The critical temperature}

Theorem \ref{mthmbcs} enables a precise definition of the critical
temperature, by
\begin{equation}\label{crittemp}
T_c (\lambda V): = \inf \{ T \, | K_{T,\mu} + \lambda V \geq 0\}.
\end{equation}
The symbol $K_{T,\mu}(p)$ is point-wise monotone in $T$. This
implies that for any potential $V$, there is a critical temperature
$0\leq T_c (\lambda V)< \infty$ that separates two phases, a {\em
superfluid} phase for $ 0 \leq T < T_c(\lambda V)$ from a {\em
normal} phase for $ T_c(\lambda V) \leq T < \infty$. Note that $T_c(\lambda
V) = 0$ means that there is no superfluid phase for $\lambda V$.
Using the linear criterion (\ref{crittemp}) we can classify the potentials for which
$T_c(\lambda V)
> 0$, and simultaneously  we can evaluate the
asymptotic behavior of $T_c(\lambda V)$ in the limit of small
$\lambda$. This can be done by spectral theoretical methods. Applying the Birman-Schwinger principle one observes that
the critical temperature $T_c$ can be characterized by the
fact that the compact operator
\begin{equation}\label{eq:bsop}
\lambda (\sgn V) |V|^{1/2} K_{T_c,\mu}^{-1} |V|^{1/2}
\end{equation}
has $-1$ as its lowest eigenvalue. This operator is singular for $T_c
\to 0$, and the key observation is that its singular part is
represented by the operator $\lambda \ln(1/T_c)\V_\mu$, where $\V_\mu:
\, L^2(\Omega_\mu) \mapsto L^2(\Omega_\mu)$ is given by
\be\label{defvm} \big(\V_\mu u\big)(p) = \frac 1{(2\pi)^{3/2}} \frac
1{\sqrt{\mu}}\int_{\Omega_\mu}\hat V(p-q) u(q) \,d\omega(q)\,.  \ee
Here, $\Omega_\mu$ denotes the $2$-sphere with radius $\sqrt{\mu}$,
and $d\omega$ denotes Lebesgue measure on $\Omega_\mu$. We note that
the operator $\V_\mu$ has appeared already earlier in the literature
\cite{BY,LSW}.

Our analysis here is somewhat similar in spirit to the one concerning
the lowest eigenvalue of the Schr\"odinger operator $p^2+\lambda V$ in
\emph{two} space dimensions \cite{simon}. This latter case is
considerably simpler, however, as $p^2$ has a unique minimum at $p=0$,
whereas $K_{T,\mu}(p)$ takes its minimal value on the Fermi sphere
$p^2=\mu$, meaning that its minimum is highly degenerate. Hence, in
our case, the problem is reduced to analyzing a map from the $L^2$
functions on the Fermi sphere $\Omega_{\mu}$ (of radius $\sqrt{\mu}$)
to itself. Let us denote the lowest eigenvalue of $\V_\mu$ as
$$e_\mu(V) := \infspec  \V_\mu\,.$$
Whenever this eigenvalue is negative then the critical temperature
is non zero for {\em all} $\lambda > 0$, and we can evaluate its
asymptotics. Moreover, the converse is \lq\lq almost\rq\rq\ true:

\begin{theorem}[{\bf \cite[Theorem 1]{FHNS}}]\label{thm2.2}
  Let $V\in L^{3/2}(\R^3)\cap L^1(\R^3)$ be real-valued, and let
  $\lambda>0$.
\begin{itemize}
\item[(i)] Assume that $e_{\mu}(V)<0$. Then $T_c(\lambda V)$ is non-zero for
all $\lambda >0$, and
\begin{equation}\label{symptbeh}
\lim_{\lambda\to 0}   \lambda\, \ln \frac{\mu}{T_c(\lambda V)} =
-\frac{1}{e_{\mu}(V)} \,.
\end{equation}
\item[(ii)] Assume that $e_{\mu}(V)= 0$. If $T_c(\lambda V)$ is
  non-zero, then $\ln (\mu/{T_c(\lambda V)})\geq c \lambda^{-2}$ for some $c>0$ and
  small $\lambda$.
\item[(iii)] If there exists an $\epsilon>0$ such that $e_{\mu}(V-\epsilon|V|)= 0$,
then $T_c(\lambda V) = 0$ for small enough $\lambda$.
\end{itemize}
\end{theorem}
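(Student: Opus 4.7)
My plan is to apply the Birman--Schwinger formulation noted in \eqref{eq:bsop}: $T = T_c(\lambda V)$ is characterized by the condition that $-1$ is the lowest eigenvalue of the compact operator
$$B_T := \lambda(\sgn V)|V|^{1/2} K_{T,\mu}^{-1}|V|^{1/2}$$
on $L^2(\R^3)$, with $T_c(\lambda V)=0$ meaning no such $T>0$ exists. Pointwise monotonicity of $K_{T,\mu}(p)$ in $T$ ensures $T\mapsto\infspec B_T$ is monotone, so the theorem reduces to a spectral analysis of $B_T$ as $T\to 0$ and $\lambda\to 0$.

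The analytic heart is an asymptotic decomposition of $K_{T,\mu}^{-1}$ as $T\to 0$. Since $K_{T,\mu}(p)\sim|p^2-\mu|$ for $|p^2-\mu|\gg T$ and saturates at order $T$ on the Fermi sphere $\Omega_\mu = \{p^2=\mu\}$, a radial change of variables in a tubular neighborhood of $\Omega_\mu$ produces a logarithmic singularity concentrated on $\Omega_\mu$:
$$\int_{\R^3} K_{T,\mu}^{-1}(p)|f(p)|^2\,dp \;=\; \ln(\mu/T)\int_{\Omega_\mu}|f(p)|^2\,d\omega(p) + O(1),$$
uniformly in $T\in(0,\mu]$ for sufficiently regular $f$. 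Lifting this quadratic-form asymptotic to an operator identity yields
$$B_T \;=\; \lambda\ln(\mu/T)\,\mathfrak{M} + \lambda\,\tilde M_T,$$
where $\mathfrak{M} := (\sgn V)|V|^{1/2}\dig_\mu^*\dig_\mu|V|^{1/2}$, $\dig_\mu:L^2(\R^3)\to L^2(\Omega_\mu)$ is the trace of the Fourier transform onto $\Omega_\mu$, and $\tilde M_T$ is self-adjoint with $\|\tilde M_T\|$ bounded uniformly in $T\in(0,\mu]$. By cyclicity of the nonzero spectrum, $\mathfrak{M}$ is isospectral to $\dig_\mu V \dig_\mu^*$, which is $\V_\mu$ up to a $\mu$-dependent factor, so $\infspec\mathfrak{M} = e_\mu(V)$. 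The main technical obstacle is proving this decomposition with $\|\tilde M_T\|$ uniformly bounded using only $V\in L^1\cap L^{3/2}$; this requires a Sobolev--trace estimate on $\dig_\mu$ and careful splitting of the $p$-integral across $\Omega_\mu$.

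Given the decomposition, all three parts follow from a Schur-complement analysis of the eigenvalue equation $B_{T_c}\phi=-\phi$ with respect to the spectral projection onto the low modes of $\mathfrak{M}$. For (i), $\mathfrak{M}$ has an isolated negative eigenvalue $e_\mu(V)<0$, and matching leading orders gives $\lambda\ln(\mu/T_c)\,e_\mu(V)+O(\lambda) = -1$, which rearranged is \eqref{symptbeh}. For (ii), $\mathfrak{M}\geq 0$ precludes the singular term from producing a negative eigenvalue on its own; the value $-1$ can be attained only through a second-order coupling of the bounded remainder $\tilde M_{T_c}$ to the near-null modes of $\mathfrak{M}$, and tracking the resulting Schur correction yields the necessary condition $\ln(\mu/T_c)\geq c\lambda^{-2}$. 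For (iii), the hypothesis $e_\mu(V-\epsilon|V|)=0$ gives, by linearity in the potential, the sphere-operator inequality $\V_\mu \geq \epsilon\,\V^{|V|}_\mu\geq 0$, where $\V^{|V|}_\mu$ is \eqref{defvm} with $|V|$ replacing $V$; transferred back through $\dig_\mu$, this strict lower bound together with a matching control of $\tilde M_T$ against $|V|^{1/2}\dig_\mu^*\dig_\mu|V|^{1/2}$ forces $\infspec B_T > -1$ uniformly in $T>0$ for $\lambda<\epsilon/C$, hence $T_c(\lambda V)=0$.
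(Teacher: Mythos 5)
Your proposal is correct and follows essentially the same route as the paper: the Birman--Schwinger reduction, the splitting of $K_{T,\mu}^{-1}$ into a logarithmically divergent part concentrated on the Fermi sphere plus a remainder whose sandwich by $|V|^{1/2}$ is uniformly bounded, isospectrality to an operator on $L^2(\Omega_\mu)$, and first-order perturbation theory for (i), positivity of $\dig V\dig^*$ forcing the $\lambda^2$ term to do the work in (ii), and the operator inequality against $\dig|V|\dig^*$ for (iii). The paper implements your ``Schur-complement analysis'' as the explicit factorization of $1+B_T$ leading to the implicit equation for $T_c$, which is the same idea in a slightly different packaging.
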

As we see, the occurrence of superfluidity as well as the asymptotic
behavior of $T_c(\lambda V)$ is governed by $e_\mu(V)$. A sufficient
condition for $e_\mu(V)$ to be negative is $ \int V < 0$. But one can
easily find other examples.  Eq.~\eqref{symptbeh} shows that the
critical temperature behaves like $ T_c(\lambda V) \sim \mu e^{
  1/(\lambda e_{\mu}(V))}$. In other words it is exponentially small
in the coupling.

In the following, we shall derive the second order correction, i.e.,
we will compute the constant in front of the exponentially small
term in $T_c$. For this purpose, we define an operator $\W_\mu$ on
$L^2(\Omega_\mu)$ via its quadratic form
\begin{align}\nonumber
  \langle u | \W_\mu |u\rangle = \int_{0}^\infty d|p| & \left( \frac
    {|p|^2}{\big||p|^2-\mu\big|} \left[ \int_{\S^2} d\Omega \left(
        |\hat\varphi(p)|^2 -
        |\hat\varphi(\sqrt\mu p/|p|)|^2 \right)\right] \right. \\
  \label{defW} & \quad \left. + \frac 1{|p|^2} \int_{\S^2} d\Omega\,
    |\hat\varphi(\sqrt\mu p/|p|)|^2\right) \,.
\end{align}
Here, $\hat\varphi(p) = (2\pi)^{-3/2} \int_{\Omega_\mu} \hat V(p-q)
u(q) d\omega(q)$, and $(|p|,\Omega)\in \R_+\times \S^2$ denote
spherical coordinates for $p\in\R^3$. We note that since $V\in
L^1(\R^3)$, $\int_{\S^2} d\Omega\,|\hat\varphi(p)|^2$ is Lipschitz
continuous in $|p|$ for any $u\in L^2(\R^3)$, and hence the radial
integration is well-defined, even in the vicinity of $p^2 = \mu$.  In
fact the operator $\W_\mu$ can be shown to be Hilbert-Schmidt class,
see \cite[Section 3]{HS}.

For $\lambda >0$, let
\begin{equation}\label{defB}
  \B_\mu = \lambda \frac \pi {2\sqrt \mu} \V_\mu
- \lambda^2 \frac{\pi}{2\mu} \W_\mu\,,
\end{equation}
and let $b_\mu(\lambda)$ denote its ground state energy,
\begin{equation}\label{defbm}
  b_\mu(\lambda) = \infspec \B_\mu \,.
\end{equation}
We note that if $e_\mu<0$, then also $b_\mu(\lambda)< 0$ for small
$\lambda$. In fact, if the eigenfunction corresponding to the lowest
eigenvalue $e_\mu$ of $\V_\mu$ is unique and equals $u\in
L^2(\Omega_\mu)$, then
\begin{equation}\label{bmexpl}
b_\mu(\lambda) = \langle u|\B_\mu|u\rangle + O(\lambda^3) = \lambda
\frac{ \pi e_\mu}{2\sqrt\mu} - \lambda^2 \frac{\pi \langle u|
  \W_\mu|u\rangle}{2\mu} + O(\lambda^3)\,.
\end{equation}
In the degenerate case, this formula holds if one chooses $u$ to be
the eigenfunction of $\V_\mu$ that yields the largest value $\langle
u|\W_\mu|u\rangle$ among all such (normalized) eigenfunctions.

With the aid of $b_\mu(\lambda)$, we can now recover the next order
of the critical temperature for small $\lambda$.

\begin{theorem}[{\bf \cite[Theorem 1]{HS}}] \label{constant}
  Let $V\in L^1(\R^3)\cap L^{3/2}(\R^3)$ and let $\mu>0$.  Assume that
  $e_\mu = \infspec \V_\mu <0$, and let $b_\mu(\lambda)$ be defined in
  (\ref{defbm}).  Then the critical temperature $T_c$ for the BCS
  equation is strictly positive and satisfies
  \begin{equation}\label{themeq}
    \lim_{\lambda \to 0} \left(\ln\left(\frac\mu {T_c}\right) +
      \frac {\pi}{2 \sqrt{\mu}\, b_\mu(\lambda)}\right) = 2 - \gamma - \ln(8/\pi)\,.
  \end{equation}
  Here, $\gamma\approx 0.577$ denotes Euler's constant.
\end{theorem}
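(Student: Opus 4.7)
The strategy is to turn Theorem~\ref{mthmbcs} into a Birman--Schwinger eigenvalue problem and expand the resulting compact operator to two orders as $T\to 0$. By monotonicity of $K_{T,\mu}(p)$ in $T$, the critical temperature $T_c$ is characterized by
\begin{equation*}
\infspec\!\bigl[\lambda\,(\sgn V)|V|^{1/2}K_{T_c,\mu}^{-1}|V|^{1/2}\bigr]\;=\;-1.
\end{equation*}
Because $K_{T,\mu}^{-1}(p)=\tanh\!\bigl((p^2-\mu)/(2T)\bigr)/(p^2-\mu)$ is of order $(2T)^{-1}$ on the Fermi sphere $\Omega_\mu$ but only logarithmically divergent in its radial integrals, the plan is to peel off the leading $\ln(\mu/T)$ together with an $O(1)$ correction; both pieces factor through the Fermi sphere and together assemble into the operator $\B_\mu$ of~\eqref{defB}.

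\smallskip

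\textbf{Two-term asymptotic identity.} The core calculation is an identity of the schematic form
\begin{equation*}
\int_0^\infty\!\frac{s^2\,\tanh\!\bigl((s^2-\mu)/(2T)\bigr)}{s^2-\mu}\,g(s)\,ds \;=\; g(\sqrt\mu)\,\sqrt\mu\,\bigl[\ln(\mu/T)+\gamma-2+\ln(8/\pi)\bigr]\,+\,R_\mu(g)\,+\,o(1),
\end{equation*}
for sufficiently regular, decaying $g$, where $R_\mu$ is a concrete finite functional of $g-g(\sqrt\mu)$. The universal constant $\gamma-2+\ln(8/\pi)$ emerges from the elementary integral $\int_0^\infty[\tanh u/u-\chi_{\{u\ge 1\}}/u]\,du$ (the auxiliary computation also used in the 2D Schr\"odinger analysis of \cite{simon}) together with the Jacobian $ds=dt/(2\sqrt{t+\mu})$ that changes variables to $t=s^2-\mu$. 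Applied quadratically with $g$ the radial average of $\overline{\hat\psi(s\cdot)}\,(\hat V\ast\hat\psi)(s\cdot)$-type expressions, the boundary term $g(\sqrt\mu)\sqrt\mu$ reproduces $\tfrac{2\sqrt\mu}{\pi}\langle u,\V_\mu u\rangle$ after matching the normalization in~\eqref{defvm}, while $R_\mu$, written in spherical coordinates, is precisely $-\tfrac{2}{\pi\sqrt\mu}\langle u,\W_\mu u\rangle$ in the notation of~\eqref{defW}; the Lipschitz continuity of the spherical integral noted after \eqref{defW} is what makes $R_\mu$ well defined at $|p|=\sqrt\mu$.

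\smallskip

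\textbf{Inversion and main obstacle.} Assembling these estimates and projecting onto the Fermi-sphere subspace carrying the ground state, the Birman--Schwinger condition becomes
\begin{equation*}
-1 \;=\; \bigl[\ln(\mu/T_c)+\gamma-2+\ln(8/\pi)\bigr]\cdot\tfrac{2\sqrt\mu}{\pi}\,b_\mu(\lambda)\,+\,o(1),
\end{equation*}
where $b_\mu(\lambda)=\infspec\B_\mu$ by~\eqref{defbm}. Solving for $\ln(\mu/T_c)$ gives~\eqref{themeq} immediately. The main obstacle is not in the formal identity but in the uniform control of both limits $T\to 0$ and $\lambda\to 0$ simultaneously: since $\ln(\mu/T_c)\sim 1/\lambda$, the naive hierarchy ``leading log $\gg$ bounded correction'' breaks down once one multiplies the $\W_\mu$ correction by the prefactor $\ln(\mu/T_c)$, and regular perturbation theory on $L^2(\R^3)$ fails. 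The remedy is a Feshbach/Schur-complement reduction onto the finite-dimensional ground-state subspace of $\V_\mu$, where $\B_\mu$ acts as a genuine small perturbation of a scalar, while the orthogonal complement is dominated by the resolvent of $\V_\mu$ at $e_\mu<0$, using that $e_\mu$ is a finitely degenerate isolated eigenvalue of the compact operator $\V_\mu$.
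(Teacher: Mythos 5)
Your proposal follows essentially the same route as the paper's proof: the Birman--Schwinger reformulation, the splitting of $K_{T,\mu}^{-1}$ into a singular part factoring through the Fermi sphere (whose scalar coefficient carries the $\ln(\mu/T)+\gamma-2+\ln(8/\pi)$ asymptotics) plus a uniformly Hilbert--Schmidt remainder whose limit is $\W_\mu$, and then first-order perturbation theory to assemble $\B_\mu$ and invert; your Feshbach reduction plays the same role as the paper's resolvent factorization of $1+B_T$ in \eqref{1ba}. The argument is correct in outline and matches the paper's strategy.
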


The Theorem says that, for small $\lambda$,
\begin{equation}\label{formula}
  T_c \sim \mu \frac{8 e^{\gamma-2}}{\pi} e^{\pi/(2 \sqrt{\mu} b_\mu(\lambda))} \,.
\end{equation}
Note that $b_\mu(\lambda)$ can be interpreted as a (renormalized)
effective scattering length of $2\lambda V(x)$ (in second order Born
approximation) for particles with momenta on the Fermi sphere. In
fact, if $V$ is {\it radial} and $\int_{\R^3} V(x)dx < 0$, it is not
difficult to see that for small enough $\mu$ the (unique)
eigenfunction corresponding to the lowest eigenvalue $e_\mu$ of
$\V_\mu$ is the constant function $u(p)=(4\pi\mu)^{-1/2}$. (See
\cite[Section 2.1]{FHNS}.) For this $u$, we have
$$
\lim_{\mu\to 0} \langle u|\B_\mu|u\rangle = (\lambda/4\pi)\int_{\R^3} V(x) dx
- (\lambda/4\pi)^2 \int_{\R^6} \frac{V(x) V(y)}{|x-y|}dxdy \equiv
a_0(\lambda)\,.
$$
Here, $a_0(\lambda)$ equals the {\it scattering length} of $2\lambda
V$ in second order Born approximation. Assuming additionally that
$V(x)|x| \in L^1$ and bearing in mind that $b_\mu(\lambda) = \langle
u|\B_\mu|u\rangle + O(\lambda^3)$ for small enough $\mu$,
 we can, in fact, estimate the difference between $b_\mu(\lambda)$ and $a_0(\lambda)$. Namely we prove in \cite[Proposition 1]{HS} that
$$
  \lim_{\mu\to 0} \frac 1{\sqrt\mu}\left(\frac 1{\langle u|\B_\mu|u\rangle} - \frac 1
    {a_0(\lambda)}\right)=0\,.
$$
This yields the approximation
$$T_c \approx \mu \frac{8 e^{\gamma-2}}{\pi} e^{\pi/(2 \sqrt{\mu} a_0(\lambda))}
$$
in the limit of  small $\lambda$ {\it and} small $\mu$. This
expression is well-known in the physics literature \cite{gorkov,NS}.
We point out, however, that our formula (\ref{formula}) is much more
general since it holds for any value of $\mu>0$.

\subsection{Energy Gap at Zero Temperature}

Consider now the zero temperature case $T=0$. In this case, it is
natural to formulate a functional depending only on $\alpha$ instead
of $\gamma$ and $\alpha$. In fact, for $T=0$  the optimal choice of
$\gamma(p)$ in $\F_T$ for given $\hat\alpha(p)$ is clearly
\begin{equation}\label{gal}
  \gamma(p) = \left\{ \begin{array}{ll}
      \half (1+\sqrt{1-4|\hat\alpha(p)|^2}) & {\rm for\ } p^2< \mu \\
      \half (1-\sqrt{1-4|\hat\alpha(p)|^2}) & {\rm for\ } p^2>\mu
    \end{array}\right.\,.
\end{equation}
Subtracting an unimportant constant, this leads to the {\it zero temperature
BCS functional}
\begin{equation}\label{deffa}
  \F_0(\alpha)
  =\frac 12 \int_{\R^3} |p^2-\mu|\left(1-\sqrt{1-4|\hat\al(p)|^2}\right)dp+ \lambda \int_{\R^3}
  V(x)|\alpha(x)|^2\,dx\,.
\end{equation}

The variational equation satisfied by a minimizer of (\ref{deffa}) is then
\begin{equation}\label{bcset}
  \Delta(p) = -\frac \lambda{(2\pi)^{3/2}} \int_{\R^3} \hat V(p-q)
  \frac{\Delta(q)}{E(q)} \, dq\,,
\end{equation}
with $\Delta(p) =  2E(p) \hat \alpha(p)$. 
This is simply the BCS equation (\ref{bcse}) at $T=0$.  For a solution
$\Delta$, the {\it energy gap} $\Xi$ is defined as
\begin{equation}\label{defxi}
\Xi = \inf_p E(p) = \inf_p \sqrt{(p^2-\mu)^2 + |\Delta(p)|^2}\,.
\end{equation}
It has the interpretation of an energy gap in the corresponding
second-quantized BCS Hamiltonian (see, e.g., \cite{MR} or the
appendix in \cite{HHSS}.)

A priori, the fact that the order parameter $\Delta$ is non vanishing
does not imply that $\Xi>0$. Strict positivity of $\Xi$ turns out to
be related to the continuity of the corresponding $\gamma$ in
(\ref{gal}). In fact, we are going to prove in Lemma \ref{lemequ} that
if $V$ decays fast enough, i.e., $V(x)|x| \in L^{6/5}(\R^3)$, the two
properties, $\Xi > 0$ and $\gamma(p)$ continuous, are equivalent. Both properties hold true under the assumption that $\int V < 0$:

\begin{theorem}\label{psofgap}
  Let $V \in L^{3/2}\cap L^1$, with $ V(x)|x| \in L^{6/5}(\R^3) $ and
  $\int V = (2\pi)^{3/2}\hat V(0) < 0$. Let $\alpha$ be a minimizer of the BCS
  functional. Then $\Xi$ defined in (\ref{defxi}) is strictly
  positive, and the corresponding momentum distribution $\gamma$ in
  (\ref{gal}) is continuous.
\end{theorem}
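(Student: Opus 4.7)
The plan is to reduce Theorem~\ref{psofgap} to two claims, namely (a) $\Delta\in C(\R^3)$, and (b) $\Delta(p)\neq0$ for every $p\in\Omega_\mu:=\{p\in\R^3:|p|=\sqrt\mu\}$. From (a) and (b) the theorem is immediate: compactness of $\Omega_\mu$ gives $\min_{\Omega_\mu}|\Delta|>0$ and hence $\Xi>0$, while the formula $|\hat\alpha(p)|=|\Delta(p)|/(2E(p))$ shows $|\hat\alpha(p)|\to 1/2$ as $p\to\Omega_\mu$, so the two branches of (\ref{gal}) match continuously at $\Omega_\mu$ and $\gamma$ is continuous. Theorem~\ref{thm2.2}(i) together with $\hat V(0)<0$ ensures $T_c(\lambda V)>0$, so the minimizer $\alpha$ is non-trivial to begin with.

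For (a), I would combine (\ref{bcset}) with $\hat\alpha=\Delta/(2E)$ to recast the BCS equation as $\Delta=-2\lambda(2\pi)^{-3/2}\,\widehat{V\alpha}$, whence continuity of $\Delta$ reduces by Riemann--Lebesgue to $V\alpha\in L^1(\R^3)$. Finiteness of $\F_0(\alpha)$ forces $\int p^2|\hat\alpha(p)|^2\,dp<\infty$ (using $\gamma(p)\sim|\hat\alpha(p)|^2$ at large $|p|$), so $\alpha\in H^1(\R^3)\hookrightarrow L^6(\R^3)$ by Sobolev. The hypotheses $V\in L^{3/2}$ and $|x|V\in L^{6/5}$ together yield $V\in L^{6/5}(\R^3)$ (local part via $L^{3/2}\hookrightarrow L^{6/5}$ on bounded sets, tail part via $|V(x)|\le||x|V(x)|$ for $|x|\ge1$), and H\"older then gives $V\alpha\in L^1$. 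Running the same bound on $|x|V\alpha$ upgrades $\Delta$ to a Lipschitz function, which will be used below.

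For (b), I would argue by contradiction: suppose $\Delta(p_*)=0$ for some $p_*\in\Omega_\mu$, so $E(p_*)=0$. The strategy is to construct a perturbation $\hat\alpha+s\hat\eta_\epsilon$ concentrated in momentum space near $p_*$ that strictly decreases $\F_0$. The quadratic part of the expansion at the minimizer reads
\begin{equation*}
\delta^2\F_0[\hat\eta]=2\int\!\Big[E(p)\,|\hat\eta(p)|^2+\frac{4\,E(p)\,|\mathrm{Re}(\overline{\hat\alpha(p)}\,\hat\eta(p))|^2}{1-4|\hat\alpha(p)|^2}\Big]\,dp+2\lambda\int V(x)\,|\eta^\vee(x)|^2\,dx.
\end{equation*}
The kinetic term is small because $E$ vanishes at $p_*$ and is quantitatively controlled by the Lipschitz bound on $\Delta$; the potential term can be steered to a strictly negative value via $\hat V(0)<0$ by choosing $\eta_\epsilon^\vee(x)=e^{ip_*\cdot x}\chi(x)$ for a suitable envelope $\chi$ with $\int V|\chi|^2\,dx<0$. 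A direction of strict decrease then contradicts minimality of $\alpha$.

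The main obstacle is precisely the calibration in (b). A naive isotropic momentum bump of width $\sigma$ yields kinetic cost of order $\sigma$ but potential gain only of order $\sigma^3\int V$, so the kinetic term dominates as $\sigma\to0$. A successful trial direction must therefore exploit the anisotropy of $E$ near $p_*$---tangentially to $\Omega_\mu$ one has $|p^2-\mu|=O(|p-p_*|^2)$ while the Lipschitz bound gives $|\Delta(p)|=O(|p-p_*|)$, so $E$ vanishes faster tangentially than radially---together with the position-space localization of $V$. The decay condition $V(x)|x|\in L^{6/5}$ enters both in establishing the Lipschitz regularity of $\Delta$ and in justifying the limit of the potential integral against the slowly-varying envelope. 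Subject to executing this calibration, (a) and (b) combine to give the theorem as in the first paragraph.
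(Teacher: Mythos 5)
Your overall architecture matches the paper's: part (a) is essentially the paper's Lemma~\ref{lemequ} (Lipschitz continuity of $\Delta$ from $V(x)|x|\in L^{6/5}$ and $\alpha\in L^2\cap L^6$, plus the relation \eqref{nl}), and part (b) is the paper's second-variation argument, whose quadratic form you compute correctly. But the decisive step of (b) is left unexecuted, and your own scaling analysis correctly shows that the route you sketch does not close as stated. Two ingredients are missing. First, the paper observes that if $\Delta(p')=0$ at a single point $p'\in\Omega_\mu$, then $\Delta$ vanishes on an entire open neighborhood of $p'$ \emph{in} $\Omega_\mu$: by \eqref{nl}, on the Fermi sphere one has $|\hat\alpha|^2=1/4$ wherever $\Delta\neq 0$, while Lipschitz continuity of $\Delta$ forces $|\hat\alpha(p')|^2<1/4$ strictly; continuity of $\hat\alpha$ then gives $|\hat\alpha|^2<1/4-\delta$ on a ball around $p'$, hence $\Delta\equiv 0$ on its intersection with $\Omega_\mu$. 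This is what makes $E(p)\leq c\,|p^2-\mu|$ on a whole conical neighborhood, i.e.\ the kinetic cost is governed by the \emph{radial} variable only. Your picture of an isolated zero with $|\Delta(p)|=O(|p-p_*|)$ gives $E(p)=O(|p-p_*|)$ in all directions (note also that tangentially the Lipschitz bound on $\Delta$ dominates the quadratic vanishing of $|p^2-\mu|$, so $E$ does \emph{not} vanish faster tangentially than radially under your hypotheses), and a linear zero at an isolated point in three dimensions is not enough to produce instability at arbitrarily weak coupling.

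Second, even with the correct anisotropic geometry, no power-law calibration of bump widths yields a strictly negative second variation: with $\hat g_n(p)=\psi_n(s)f_n(|p|)$ the angular profile fixed in the flat patch, the remaining radial problem is the one-dimensional operator $|p|+V_{\rm eff}$, for which kinetic and potential contributions of simple bumps are of the \emph{same} order. The paper closes the argument by invoking the Birman--Schwinger principle for the relativistic operator $|p|+V$ in one dimension, whose Birman--Schwinger kernel diverges logarithmically at zero energy, so that a negative eigenvalue exists for every coupling whenever $\int V<0$; the witnessing trial function has a logarithmic (not power-law) profile. Without the open-zero-set observation and this one-dimensional reduction, part (b) --- and hence the theorem --- is not proved.
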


One of the difficulties involved in evaluating $\Xi$ is the potential
non-uniqueness of minimizers of (\ref{deffa}), and hence
non-uniqueness of solutions of the BCS gap equation (\ref{bcset}). The
gap $\Xi$ may depend on the choice of $\Delta$ in this case. For
potentials $V$ with non-positive Fourier transform, however, we can
prove the uniqueness of $\Delta$ and, in addition, we are able to
derive the precise asymptotic of $\Xi$ as $\lambda\to 0$.

In the following we will restrict our attention to radial potentials
$V$ with non-positive Fourier transform.  We also assume that $\hat
V(0) = (2\pi)^{-3/2} \int V(x) dx < 0$.  It is easy to see that
$e_\mu = \infspec \V_\mu <0$ in this case, and that the
(unique) eigenfunction corresponding to this lowest eigenvalue of
$\V_\mu$ is the constant function.

In particular we have the following asymptotic behavior of the
energy gap $\Xi$ as $\lambda \to 0$.

\begin{theorem}[{\bf \cite[Theorem 2]{HS}}]\label{gap}
  Assume that $V\in L^1(\R^3)\cap L^{3/2}(\R^3)$ is radial, with $\hat
  V(p)\leq 0$ and $\hat V(0)<0$. Then there is a unique minimizer (up
  to a constant phase) of the BCS functional (\ref{deffa}) at
  $T=0$. The corresponding energy gap,
$ \Xi = \inf_p \sqrt{ (p^2-\mu)^2 + |\Delta(p)|^2}\,, $ is strictly
positive, and satisfies
\begin{equation}\label{themeq2}
  \lim_{\lambda \to 0} \left(\ln\left(\frac\mu \Xi \right) +
    \frac {\pi}{2 \sqrt{\mu}\, b_\mu(\lambda)}\right) = 2 - \ln(8)\,.
\end{equation}
Here, $b_\mu(\lambda)$ be defined in (\ref{defbm}).
\end{theorem}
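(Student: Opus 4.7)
The argument decomposes into uniqueness of the minimizer, positivity of $\Xi$, and the asymptotic formula. The first two steps rest on the sign condition $\hat V\le 0$. The interaction term in $\F_0$ equals
\[
\frac{\lambda}{(2\pi)^{3/2}}\!\int \hat V(p-q)\,\overline{\hat\alpha(p)}\,\hat\alpha(q)\,dp\,dq,
\]
which is weakly decreased under $\hat\alpha\mapsto|\hat\alpha|$, while the kinetic part of $\F_0$ depends only on $|\hat\alpha|^2$. Hence a minimizer may be taken with $\hat\alpha\ge 0$, which by \eqref{bcset} forces $\Delta\ge 0$. The associated self-consistent linear map $\Delta\mapsto -\lambda(2\pi)^{-3/2}\hat V\ast(\Delta/E)$ has a nonnegative integral kernel on the cone $\{\Delta\ge 0\}$; a Perron--Frobenius argument then gives simplicity of the top eigenvalue and hence uniqueness up to a global phase. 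Strict positivity of $\Xi$ follows from continuity of $\Delta$ together with $\Delta(p)>0$ on $\Omega_\mu$: since $V\in L^1$ makes $\hat V$ continuous with $\hat V(0)<0$, the integral in the gap equation is strictly negative in a neighborhood of the Fermi sphere, so $\Delta>0$ there, and combined with $E(p)\ge|\Delta(p)|$ one gets $\Xi>0$.

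For the asymptotic formula I would mimic the proof of Theorem~\ref{constant}. The $T=0$ gap equation \eqref{bcset} is equivalent to $(E(-i\nabla)+\lambda V)\alpha=0$, so by the Birman--Schwinger principle a nontrivial solution exists iff $-1$ is an eigenvalue of
\[
K_\Xi\;=\;\lambda\,\sgn(V)\,|V|^{1/2}\,E(p)^{-1}\,|V|^{1/2},
\]
the direct $T=0$ analogue of the operator in \eqref{eq:bsop} with $K_{T,\mu}^{-1}$ replaced by $E^{-1}$. As $\lambda\to 0$, $E^{-1}$ is singular on the Fermi sphere; following the Birman--Schwinger decomposition of \cite{FHNS,HS}, I extract the $\ln(\mu/\Xi)$-part of $E^{-1}$ and reduce the eigenvalue equation to one on $L^2(\Omega_\mu)$, whose leading term is an eigenvalue equation for $\ln(\mu/\Xi)\,\V_\mu$. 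Pushing the expansion to second order in $\lambda$ brings in $\W_\mu$ and identifies $\B_\mu$, yielding
\[
\ln(\mu/\Xi)+\frac{\pi}{2\sqrt\mu\,b_\mu(\lambda)}\;\longrightarrow\;C
\]
for a constant $C$ determined by explicit radial integrals. The value $C=2-\ln 8$ is read off from
\[
\tfrac12\!\int_{-\mu}^{\Lambda}\!\frac{d\xi}{\sqrt{\xi^2+\Xi^2}}\;=\;\ln(\mu/\Xi)+C+o(1)
\]
(with upper cutoff $\Lambda$ absorbed by the decay of $|V|^{1/2}$). This is the $T=0$ counterpart of $2-\gamma-\ln(8/\pi)$ in Theorem~\ref{constant}; the difference $\gamma-\ln\pi$ reproduces the classical BCS ratio $\Xi/T_c\to\pi e^{-\gamma}$.

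The main obstacle is the nonlinear self-consistency of $E$: the radial integral above tacitly uses $\Xi$ in place of $|\Delta(p)|$, so one must prove $|\Delta(p)|/\Xi\to 1$ uniformly on $\Omega_\mu$ as $\lambda\to 0$. This bootstrap relies both on the Perron--Frobenius simplicity of the minimizer established above and on the fact that, for $V$ radial with $\hat V\le 0$, the unique lowest eigenfunction of $\V_\mu$ is the constant function (cf.\ \cite[Sec.~2.1]{FHNS}), which forces $\hat\alpha$ to be asymptotically $s$-wave and $|\Delta(p)|$ to flatten on the Fermi sphere. Establishing this uniform concentration with the required error control is the substance of the proof.
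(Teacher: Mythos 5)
Your plan for the asymptotics is the paper's plan: rewrite the $T=0$ gap equation as $(E(-i\nabla)+\lambda V)\alpha=0$, apply the Birman--Schwinger principle to $\lambda V^{1/2}E(-i\nabla)^{-1}|V|^{1/2}$, split off the singular part proportional to $\dig^*\dig$, reduce to a self-adjoint operator on $L^2(\Omega_\mu)$, and use second-order perturbation theory to bring in $\W_\mu$ and hence $\B_\mu$. You have also correctly located the hard point, namely that one must control $|\Delta(p)|$ near the Fermi sphere well enough to replace it by a single number in the logarithmic integral and to identify $\Xi$ with $\Delta(\sqrt\mu)$. But two steps do not go through as you propose.

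First, the uniqueness argument. The map $\Delta\mapsto-\lambda(2\pi)^{-3/2}\hat V\ast(\Delta/E)$ is \emph{not} linear, since $E$ depends on $\Delta$; a Perron--Frobenius statement about simplicity of the top eigenvalue of some linearization does not give uniqueness of solutions of the nonlinear fixed-point equation, and in any case solutions of the Euler--Lagrange equation need not be minimizers. The paper's argument is different and is the step you are missing: after using $\hat V\le 0$ to reduce to minimizers with $\hat\alpha\ge 0$, one takes two putative nonnegative minimizers $f\neq g$ and tests with $\psi=\tfrac1{\sqrt2}f+\tfrac{i}{\sqrt2}g$. Then $|\hat\psi|^2=\tfrac12|\hat f|^2+\tfrac12|\hat g|^2$ exactly, the interaction term is the exact average of those of $f$ and $g$ (the cross term vanishes by symmetry of $\hat V(p-q)$), and the strict convexity of $t\mapsto 1-\sqrt{1-4t}$ makes the kinetic term strictly smaller than the average unless $\hat f=\hat g$. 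This yields uniqueness up to a phase in one stroke.

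Second, the bootstrap you defer is carried out in the paper via an explicit resolvent expansion (Lemma~\ref{Deltach}): the Birman--Schwinger eigenfunction is $\phi_\lambda=f(\lambda)\bigl(1+\lambda V^{1/2}M_\Delta|V|^{1/2}\bigr)^{-1}V^{1/2}\dig^*u$ with $u$ the constant function, whence $\Delta(p)=-f(\lambda)\bigl(\int_{\Omega_\mu}\hat V(p-q)\,d\omega(q)+\lambda\eta_\lambda(p)\bigr)$ with $\eta_\lambda$ uniformly bounded. Radial symmetry then makes $\Delta$ constant on $\Omega_\mu$, Lipschitz continuity of the leading term gives $\Xi=\Delta(\sqrt\mu)(1-o(1))$, and the same lemma justifies $\widetilde m_\mu(\Delta)=\mu^{-1/2}\bigl(\ln(\mu/\Delta(\sqrt\mu))-2+\ln 8+o(1)\bigr)$. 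Relatedly, your one-dimensional integral $\tfrac12\int_{-\mu}^{\Lambda}d\xi/\sqrt{\xi^2+\Xi^2}$ is not the quantity that produces the constant: the paper's constant $2-\ln 8$ comes from $\frac1{4\pi\mu}\int_{\R^3}\bigl(E(p)^{-1}-p^{-2}\bigr)dp$ with the full three-dimensional measure and the $1/p^2$ renormalization (no cutoff), and your half-line integral with a cutoff absorbed into $V$ would not reproduce it. As written, the proposal names the decisive estimates but does not supply them, and the uniqueness step would fail.
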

The Theorem says that, for small $\lambda$,
$$
\Xi \sim \mu \frac{8}{e^2} e^{\pi/(2 \sqrt{\mu} b_\mu(\lambda))} \,.
$$
In particular, in combination with Theorem~\ref{constant}, we obtain
the {\em universal ratio}
$$
\lim_{\lambda\to 0} \frac{ \Xi}{T_c} = \frac \pi{e^\gamma} \approx
1.7639\,.
$$
That is, the ratio of the energy gap $\Xi$ and the critical
temperature $T_c$ tends to a universal constant as $\lambda\to 0$,
independently of $V$ and $\mu$. This property has been observed
before for the original BCS model with rank one interaction
\cite{BCS,MR}, and in the low density limit for more general
interactions \cite{gorkov} under additional assumptions. Our
analysis shows that it is valid in full generality at small coupling
$\lambda \ll 1$.

\section{Sketch of the proof of Theorem \ref{mthmbcs}}

The backbone of our analysis is the linear criterion in Theorem
\ref{mthmbcs}. As a first step towards its proof, one has to prove that the functional
$\F_T(\gamma, \alpha)$ in \eqref{freeenergy} attains a minimum  on the set
$$ \de = \{ (\gamma,\alpha) \,| \, \gamma \in L^1(\R^3,(1+p^2) dp), \alpha \in
H^1(\R^3), 0\leq \gamma \leq 1, |\hat \alpha|^2 \leq \gamma(1-\gamma)
\}.$$ This can be done by proving lower semi-continuity of $\F_T$ on
$\de$. See \cite[Prop.~1]{HHSS} for details.  Theorem \ref{mthmbcs} is
then a direct consequence of the equivalence of the following three
statements \cite[Theorem 1]{HHSS}:
\begin{itemize}
\item[(i)]
The normal state $(\gamma_0,0)$, with $\gamma_0 = [e^{(p^2 - \mu)/T}
+ 1]^{-1}$ being the Fermi-Dirac distribution, is unstable under
pair formation, i.e.,
\begin{equation*}
\inf_{(\gamma,\alpha)\in\de} \F_T (\gm,\al) <
\F_T(\gm_0,0)\,.
\end{equation*}
\item[(ii)] There exists a pair $(\gamma,\alpha)\in \de$, with $\alpha
  \neq 0$, such that
\begin{equation}\label{defed}
\Delta(p)=\dfrac{p^2-\mu}{\half - \gamma(p)}\hat\alpha(p)
\end{equation}
satisfies the BCS gap equation \eqref{bcse}.
\item[(iii)] The linear operator
$ K_{T,\mu} +  V$
 has at least one negative eigenvalue.
\end{itemize}

The proof of the equivalence of these three statement consists of the
following steps. First, it is straightforward to show that (i)
$\Rightarrow$ (ii). By evaluating the stationary equations in both
variables, $\gamma$ and $\alpha$, one shows that the combination
\eqref{defed} satisfies the BCS equation \eqref{bcse}.

To show that (iii) $\Rightarrow$ (i), 
first note that $(\gamma_0,0)$  is the minimizer of
$\F_T$ in the case $V=0$. Consequently $\frac d{dt} \F_T(\gamma_0, t
g)|_{t=0} = 0$ for general $g$. Moreover, a simple calculation shows that 
$$ \frac {d^2}{dt^2} \F(\gamma_0, tg)_{t=0} = 2 \bra g|K_{T,\mu} +
\lambda V|g \ket\,.$$ If $K_{T,\mu} + \lambda V$ has a negative
eigenvalue, we thus see that $\F(\gamma_0,t g)<
\F_T(\gamma_0,0)$ for small $t$  and an appropriate choice of $g$.

The hardest part in showing the equivalence of the three statements is
to show that (ii) $\Rightarrow$ (iii). Given a pair $(\tilde
\gamma,\tilde \alpha)$ such that the corresponding $\Delta$ in
\eqref{defed} satisfies the BCS equation (\ref{bcse}), we note that if
$\hat \alpha = m(p) \hat {\tilde \alpha}(p)$ and $\gamma(p) = 1/2 +
m(p)(\gamma(p) -1/2)$, the pair $(\gamma,\alpha)$ yields the same $\Delta$ and hence also satisfies (\ref{bcse}). Moreover, with the choice 
$$m(p) = \frac{p^2 - \mu}{\frac 12 - \tilde
\gamma(p)}\frac{\tanh\frac{E(p)}{2T}}{2E(p)}$$ 
(where $E(p) = \sqrt{(p^2 - \mu)^2 + |\Delta(p)|^2}$), 
the new pair $(\gamma,
\alpha)$ satisfies additionally
\begin{align}\label{EQ1}
\frac{2E(p)}{\tanh\frac{E(p)}{2T}} = &  \frac{p^2 - \mu}{\frac 12 - \gamma(p)} \\
\label{E2}
\frac{\lambda}{(2\pi)^{3}} \int \hat V(p-q) \hat \alpha(q) dq  = &-\frac{p^2 - \mu}{ \frac 12 - \gamma} \hat  \alpha(p)\,.
\end{align}
Note that in the case
$V=0$, i.e.,  $\Delta = 0$, the equation \eqref{EQ1} reduces to
$$ 2 K_{T,\mu}(p) =  \frac{p^2 - \mu}{\frac 12 - \gamma_0}.$$
Using this fact, together with \eqref{E2}, we thus obtain 
\be\label{E3}
\langle \alpha |  K_{T,\mu} + \lambda V |\alpha\rangle = \frac 12
\left\langle \alpha \left| \frac{p^2 - \mu}{\frac 12-\gamma_0} -
\frac{p^2 - \mu}{\frac 12 - \gamma} \right|\alpha \right\rangle\,.\ee 
Using the definition of $E(p)$ and the strict monotonicity of the function
$x\mapsto  x/{\tanh\frac{x}{2T}} $ for  $x \geq 0$,  we infer from
\eqref{EQ1} that
$$ \frac{p^2 - \mu}{\frac 12 -\gamma_0} \leq \frac{p^2 - \mu}{\frac 12 - \gamma}\,,$$ with strict 
the inequality on the set where $\Delta \neq 0$.  Consequently, the
expression $\eqref{E3} $ is strictly negative. Hence $K_{T,\mu} +
\lambda V$ has a negative eigenvalue. This shows that (ii) implies (iii).

\section{Proof of Theorems \ref{thm2.2} and \ref{constant}}\label{proof:thm1}

For a (not necessarily sign-definite) potential $V(x)$ let us use the
notation
\begin{equation*}
  V(x)^{1/2} = (\sgn V(x)) |V(x)|^{1/2} \,.
\end{equation*}
From our definition of the critical temperature $T_c$ it follows
immediately that for $T=T_c$ the operator $ K_{T,\mu} + \lambda V$
has and eigenvalue $0$ and no negative eigenvalue. If $\psi$ is the
corresponding eigenvector, one can rewrite the eigenvalue equation
in the form
$$ -\psi = \lambda K_{T,\mu}^{-1} V \psi .$$
Multiplying this equation by $V^{1/2}(x)$, one obtains an eigenvalue
equation for $\varphi = V^{1/2} \psi$. This argument works in both
directions and is called the Birman-Schwinger principle (see
\cite[Lemma 1]{FHNS}). In particular it  tells us that the critical
temperature $T_c$ is determined by the fact that for this value of
$T$ the smallest eigenvalue of
\begin{equation}\label{defofbt}
  B_T = \lambda V^{1/2}K_{T,\mu}^{-1}|V|^{1/2}
\end{equation}
equals $-1$. Note that although $B_T$ is not self-adjoint, it has real
spectrum.

Let $\dig: L^1(\R^3) \to L^2(\Omega_\mu)$ denote the (bounded)
operator which maps $\psi\in L^1(\R^3)$ to the Fourier transform of
$\psi$, restricted to the sphere $\Omega_\mu$. Since $V\in L^1(\R^3)$,
multiplication by $|V|^{1/2}$ is a bounded operator from $L^2(\R^3)$ to
$L^1(\R^3)$, and hence $\dig |V|^{1/2}$ is a bounded operator from
$L^2(\R^3)$ to $L^2(\Omega_\mu)$. Let
$$
m_\mu(T) = \max\left\{ \frac 1{4\pi \mu} \int_{\R^3} \left( \frac 1{K_{T,\mu}(p)}
  - \frac 1{p^2}\right) dp\, , 0\right\} \,,
$$
and let
\begin{equation}\label{defmt}
  M_T = K_{T,\mu}^{-1} - m_\mu(T) \dig^* \dig\,.
\end{equation}
As in \cite[Lemma~2]{FHNS} one can show that $ V^{1/2}
M_T |V|^{1/2}$ is a Hilbert-Schmidt operator on $L^2(\R^3)$, and its
Hilbert Schmidt norm is bounded uniformly in $T$. In particular,  the
singular part of $B_T$ as $T\to 0$ is entirely determined by
$V^{1/2}\dig^* \dig |V|^{1/2}$.

Since $V^{1/2} M_T |V|^{1/2}$ is uniformly bounded, we can choose
$\lambda$ small enough such that  $1+\lambda V^{1/2} M_T |V|^{1/2}$
is invertible, and we can then write $1+ B_T$ as
\begin{align}\label{1ba}
  1+ B_T &= 1+ \lambda V^{1/2} \left( m_\mu(T) \dig^* \dig +
    M_T\right) |V|^{1/2} \\ \nonumber &= \left(1+ \lambda V^{1/2} M_T
    |V|^{1/2} \right) \left( 1 + \frac{\lambda m_\mu(T)}{1+ \lambda
      V^{1/2} M_T |V|^{1/2}} V^{1/2} \dig^* \dig |V|^{1/2}\right)\,.
\end{align}
Then $B_T$ having an eigenvalue $-1$  is equivalent to
\begin{equation}\label{a1}
  \frac{\lambda m_\mu(T)}{1+ \lambda V^{1/2} M_T |V|^{1/2}} V^{1/2}
  \dig^* \dig |V|^{1/2}
\end{equation}
having an eigenvalue $-1$. The operator in (\ref{a1}) is
isospectral to the selfadjoint operator
\begin{equation}\label{b1}
\dig |V|^{1/2} \frac {  \lambda m_\mu(T) }{1+ \lambda V^{1/2} M_T
    |V|^{1/2}} V^{1/2} \dig^*\,,
\end{equation}
acting on $L^2(\Omega_\mu)$.

At $T=T_c$, $-1$ is the smallest eigenvalue of $B_T$, hence
(\ref{a1}) and (\ref{b1}) have an eigenvalue $-1$ for this value of
$T$. Moreover, we can conclude that $-1$ is actually the {\it
smallest} eigenvalue of (\ref{a1}) and (\ref{b1}) in this case. For,
if there were an eigenvalue less then $-1$, we could increase $T$
and, by continuity, find some $T>T_c$ for which there is an
eigenvalue $-1$. Using (\ref{1ba}), this would contradict the fact
that $B_T$ has no eigenvalue $-1$ for $T>T_c$.

Consequently, the equation for the critical temperature can be written
as \be\label{equcrittemp} \lambda m_\mu(T_c) \, \infspec \dig
|V|^{1/2} \frac { 1}{1+ \lambda V^{1/2} M_{T_c} |V|^{1/2}} V^{1/2}
\dig^* = -1\,. \ee This equation is the starting point for the proof of
Theorems~\ref{thm2.2} and~\ref{constant}.

\begin{proof}[Proof of Theorem \ref{thm2.2}]
  Up to first order in $\lambda$ the equation \eqref{equcrittemp}
  reads \be\label{crittempfirst} \lambda m_\mu(T_c)\, \infspec \dig
  [V- \lambda V M_{T_c}V + O(\lambda^2) ]\dig^* = -1\,, \ee where the
  error term $O(\lambda^2)$ is uniformly bounded in $T_c$.  Note that
  $\dig V \dig^* = \sqrt\mu\, \V_\mu$ defined in (\ref{defvm}). Assume
  now that $e_\mu = \infspec \V_\mu$ is strictly negative. Since
  $V^{1/2} M_{T_c} V^{1/2}$ is uniformly bounded, it follows
  immediately that
$$
\lim_{\lambda \to 0} \lambda m_\mu(T_c) =- \frac 1{\infspec \dig V
  \dig^*} = - \frac 1 {\sqrt\mu \, e_\mu}\,.
$$
Together with the asymptotic behavior $m_\mu(T) \sim \mu^{-1/2}
\ln(\mu/T)$ as $T\to 0$, this implies the leading order behavior of
$\ln (\mu/T_c)$ as $\lambda\to 0$ and proves the statement in $(i)$.

In order to see $(ii)$ it suffices to realize that, in the case
$\dig V \dig^* \geq 0$, Eq.~\eqref{crittempfirst} yields $
m_{T_c}\geq \const/\lambda^2.$

The statement $(iii)$ is a consequence of the fact that
$$ \dig |V|^{1/2} \frac {
1}{1+ \lambda V^{1/2} M_{T_c}
    |V|^{1/2}} V^{1/2} \dig^* \geq \dig [V - \const \lambda |V|] \dig^* \geq 0,
    $$ for $\lambda$ small enough. We refer to \cite{FHNS} for details.
\end{proof}

\begin{proof}[Proof of Theorem \ref{constant}]
To obtain the next order, we use Eq.~\eqref{crittempfirst} and
employ first order perturbation theory. Since $\dig V \dig^*$ is
compact and $\infspec \dig V \dig^*<0$ by assumption, first order
perturbation theory implies that
\begin{equation}\label{deno}
  m_\mu(T_c) = \frac {-1}{ \lambda \langle u| \dig V \dig^*| u\rangle -
    \lambda ^2 \langle u| \dig V M_{T_c} V \dig^*| u\rangle + O(\lambda^3)}\,,
\end{equation}
where $u$ is the (normalized) eigenfunction corresponding to the lowest eigenvalue
of $\dig V \dig^*$. (In case of degeneracy, one has to the choose the
$u$ that minimizes the $\lambda^2$ term in the denominator of
(\ref{deno}) among all such eigenfunctions.)

Eq.~(\ref{deno}) is an implicit equation for $T_c$. Since $\dig V
M_T V\dig^*$ is uniformly bounded and $T_c\to 0$ as $\lambda \to 0$,
we have to evaluate the limit of $\langle u| \dig V M_T V \dig^*|
u\rangle$ as $T\to 0$. To this aim, let $\varphi = V \dig^* u$. Then
\begin{align}\nonumber
  &\langle u| \dig V M_T V \dig^* |u\rangle \\ \label{comeq}
& = \int_{\R^3} \frac 1{K_{T,\mu}(p)}
  |\hat\varphi(p)|^2 \, dp - m_\mu(T) \int_{\Omega_\mu}
  |\hat\varphi(p)|^2 \, d\omega(p) \\ \nonumber & = \int_{\R^3}
  \left(\frac 1{K_{T,\mu}(p)} \left[ |\hat\varphi(p)|^2
      -|\hat\varphi(\sqrt\mu p/|p|)|^2 \right] + \frac 1{p^2}
    |\hat\varphi(\sqrt\mu p/|p|)|^2 \right) dp\,.
\end{align}
Recall that $K_{T,\mu}(p)$ converges to $|p^2-\mu|$ as $T\to 0$.
Using the Lipschitz continuity of the spherical average of
$|\hat\varphi(p)|^2$ (see \cite[Eq.~(29)]{HS}) it is easy to
see that
\begin{equation}\label{deno2}
  \lim_{T\to 0} \langle u| \dig V M_T V \dig^*| u\rangle  = \langle u| \W_\mu |u \rangle\,,
\end{equation}
with $\W_\mu$ defined in (\ref{defW}). In particular, combining
(\ref{deno}) and (\ref{deno2}), we have thus shown that
\begin{equation}\label{denof}
  \lim_{\lambda\to 0} \left( m_\mu(T_c) + \frac 1{\infspec \left
        (\lambda \sqrt \mu\, \V_\mu - \lambda^2 \W_\mu\right)} \right) =
  0\,.
\end{equation}
The statement follows by using the   asymptotic behavior (\cite[Lemma 1]{HS})
  \begin{equation}\label{lemresult}
    m_\mu(T) =  \frac 1{\sqrt\mu}
\left( \ln \frac \mu T + \gamma-2 + \ln\frac 8\pi  + o(1)\right)
  \end{equation}
  in the limit of small $T$, where $\gamma\approx 0.5772$ is Euler's
  constant.
\end{proof}

\section{Proof of Theorems~\ref{psofgap} and~\ref{gap}}

\subsection{Sufficient condition for $\Xi > 0$}

If $e_\mu(V) < 0$ we know that the BCS equation \eqref{bcset} has a
solution, meaning the system shows a superfluid phase for $T=0$.  This
is not sufficient, however, to guarantee the existence of a positive
gap $\Xi > 0$ nor the continuity of the momentum distribution
$\gamma$.  Unlike the case of the critical temperature, we lack a
linear criterion which allows a precise characterization of potentials
$V$ giving rise to a strictly positive gap. We are, however, able
to derive sufficient conditions, namely a fast enough decay of
$V$. Under such assumptions one can show the equivalence of the positivity of $\Xi$ and the continuity of $\gamma$. Both hold true if
additionally $\int V < 0$. It remains an open problem  to find
examples for $V$ such that $e_\mu < 0$ but $\Xi = 0$.

\begin{lemma}\label{lemequ}
Assume that  $V \in L^{3/2}$ and that $ V(x) |x| \in L^{6/5}(\R^3)$. Then $ \Xi
> 0 $ if and only if $\gamma $ is continuous.
\end{lemma}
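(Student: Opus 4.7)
My plan is to reduce the question to continuity of the energy $E(p)=\sqrt{(p^2-\mu)^2+|\Delta(p)|^2}$ and then exploit a Lipschitz estimate on $\Delta$ forced by the decay condition $V(x)|x|\in L^{6/5}$. First, since $1-4|\hat\alpha(p)|^2 = (p^2-\mu)^2/E(p)^2$, the two-case definition (\ref{gal}) collapses on $\{E>0\}$ to the single formula
$$
\gamma(p) = \tfrac 12\Bigl(1-\frac{p^2-\mu}{E(p)}\Bigr).
$$
If $\Xi>0$ then $E$ is continuous (since $\Delta$ is) and bounded below by $\Xi$, so $\gamma$ is continuous; this handles the easy implication.

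For the converse I would first establish that $\Delta$ is globally Lipschitz on $\R^3$. Writing the zero-temperature BCS equation as $\Delta(p)=-2\lambda\,\widehat{V\alpha}(p)$, Sobolev embedding in $\R^3$ places $\alpha\in H^1\hookrightarrow L^6$, and H\"older's inequality against $V(x)|x|\in L^{6/5}$ gives $|x|V\alpha\in L^1(\R^3)$. Differentiating under the Fourier integral yields $|\nabla\Delta(p)|\leq 2\lambda(2\pi)^{-3/2}\||x|V\|_{L^{6/5}}\|\alpha\|_{L^6}$, a bound independent of $p$.

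To prove $\gamma$ continuous $\Rightarrow \Xi>0$ I argue by contraposition. If $\Xi=0$, then continuity of $E$ together with its growth at infinity produces a point $p_0\in\Omega_\mu$ with $\Delta(p_0)=0$. I compare the radial one-sided limits along $p=(1\pm t)p_0$, $t\to 0^+$: one has $p^2-\mu=\pm 2t|p_0|^2+O(t^2)$, while the Lipschitz bound forces $|\Delta(p)|\leq Lt|p_0|$. Inserting both estimates into the definition of $E$ shows that
$$
\Bigl|\frac{p^2-\mu}{E(p)}\Bigr| \;\geq\; \frac{2|p_0|}{\sqrt{4|p_0|^2+L^2}} \;>\;0
$$
uniformly in small $t$, with opposite signs on the inward and outward rays. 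Consequently $\gamma((1+t)p_0)\leq \tfrac 12-c$ and $\gamma((1-t)p_0)\geq \tfrac 12+c$ for some $c>0$, contradicting continuity of $\gamma$ at $p_0$.

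The decisive step is the Lipschitz bound on $\Delta$. If $\Delta$ were only continuous, the ratio $|\Delta(p)|/|p-p_0|$ could a priori blow up along the radial rays, driving $(p^2-\mu)/E(p)\to 0$ and allowing $\gamma$ to be continuous even when $\Xi=0$. The assumption $V(x)|x|\in L^{6/5}$ is calibrated precisely so that H\"older's inequality against the $L^6$ integrability coming from Sobolev closes the estimate, and this regularity is what really drives the implication in the hard direction.
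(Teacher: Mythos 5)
Your proof is correct and follows essentially the same route as the paper: both hinge on deducing that $\Delta=-2\lambda\widehat{V\alpha}$ is Lipschitz from $\alpha\in H^1\hookrightarrow L^6$ and $V(x)|x|\in L^{6/5}$ via H\"older, and then showing that a zero of $\Delta$ on the Fermi sphere forces the ratio $(p^2-\mu)/E(p)$ to stay bounded away from zero along radial rays, producing a jump of $\gamma$ across $\Omega_\mu$. The only cosmetic difference is that you work with $\gamma=\tfrac12\bigl(1-(p^2-\mu)/E(p)\bigr)$ directly, while the paper phrases the same computation through $|\hat\alpha(p)|^2$ in its Eq.~(\ref{nl}).
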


\begin{proof}
  It is easy to deduce \cite{HHSS} from the BCS equation (\ref{bcset}) that $\hat
  \alpha$ is in $C^0(\R^3)$. Because of \eqref{gal} the continuity of
  $\gamma$ is equivalent to the fact that $|\hat \alpha | \equiv 1/4$
  on the Fermi $\Omega_\mu$. From the relation $\Delta(p) = 2 E(p) \hat \alpha(p)$
  one obtains
\begin{equation}\label{nl}
|\hat \alpha (p)|^2 = \frac 14 \frac 1{\sqrt{\frac{(p^2 - \mu)^2}{|\Delta(p)|^2} + 1 }}\,,
\end{equation}
and we can conclude that $|\hat \alpha|^2 = 1/4$ on the Fermi
surface if and only if $\Delta(p)$ does not vanish on $\Omega_\mu$.
Namely, suppose that $\Delta$ vanishes at some $p'$ on the Fermi
surface. Since $\alpha \in H^1(\R^3)$ we see that  $\alpha \in
L^2(\R^3)\cap L^{6}(\R^3)$ and hence, together with $V(x)|x| \in
L^{6/5}$, H\"older's inequality implies that $\check{\Delta}(x)|x| = 
V(x)\alpha(x)|x| \in
L^1(\R^3)$. We thus infer that $\Delta(p)$ is Lipschitz continuous,
meaning that $\Delta(p)$ cannot decay slower to $0$ than linear.
Hence there is a $\delta$ such that $\lim_{p \to p'} \frac{(p^2 -
\mu)^2}{|\Delta(p)|^2} \geq \delta$ and  $ |\alpha (p')|^2 \leq
\frac 14 \frac 1{\sqrt{\delta + 1}} < \frac 14.$
\end{proof}

\begin{proof}[Proof of Theorem \ref{psofgap}]
Let $\alpha$ be a global minimizer of the BCS functional
$\F_0$. Then for any $\hat g \in C^\infty_0(\R^3)$ such that
$|\hat \alpha + \epsilon \hat g|\leq 1/2$ for $\epsilon$ small enough,
\begin{equation}\label{condminim}
\left. \frac {d^2}{d\epsilon^2} \F(\alpha + \eps g)\right|_{\eps = 0} \geq 0.
\end{equation}
A straightforward calculation yields
\begin{equation}\label{2ndfunctderiv}
\left. \frac {d^2}{d\epsilon^2} \F(\alpha + \eps g)\right|_{\eps = 0} = 2 \langle
g| E(-i\nabla) + \lambda V |g\rangle + 8\int \frac{ |p^2 -
\mu|[\Re(\hat \alpha \bar{\hat g})]^2}{[1 - 4|\hat
\alpha|^2]^{3/2}}.
\end{equation}
Assume now that $\Xi = 0$. This means that $\Delta$ has to vanish at
some point $p' \in \Omega_\mu$. Then there has to be an open
neighborhood on $\Omega_\mu$ on which $\Delta$ vanishes. In fact,
according to the argument in the proof of Lemma \ref{lemequ}
(Eq.~\eqref{nl} and Lipschitz continuity of $\Delta$) there is a
neighborhood $\mathcal {N}_\delta (p')\subset \R^3$ in the vicinity of
$p'$ where $|\hat \alpha |^2 < 1/4 - \delta$ for some $\delta >0$, and
hence $\Delta$ vanishes on $\mathcal {N}_\delta(p') \cap
\Omega_\mu$. Note that $\Delta$ cannot vanish at one point on the
Fermi surface since otherwise $|\hat \alpha| = 1/2$ except on one
point, which contradicts the continuity of $\hat \alpha$.

We shall now construct an appropriate trial sequence $\hat g_n$,
essentially supported in $\mathcal{N}_\delta$, such that \be\label{trialkin}
\lim_{n\to\infty} \left[ \langle g_n| E(-i\nabla)|g_n \rangle + 8\int
  \frac{ |p^2 - \mu|[\Re(\hat \alpha \bar{\hat {g_n}})]^2}{[1 - 4|\hat
    \alpha|^2]^{3/2}} \right] = 0 \ee and \be\label{trialpot}
\lim_{n\to \infty} \langle g_n| V |g_n\rangle = \int_{\R^3} V(x) dx <
0\,.  \ee 
This gives a contradiction to
\eqref{condminim}.

For the construction of $g_n$ let $\psi_n \in L^2(\Omega_\mu)$ be
supported in $\mathcal{N}_\delta(p') \cap \Omega_\mu$ such that
$\psi_n(s) \to \delta(s-p')$ as $n\to \infty$. Choose also $f_n
\in L^2(\R_+, t^2 dt)$ such that $f_n(t) \to \delta(\sqrt\mu-t)$,
and let $\hat g_n(p) = \psi_n(s)f_n(|p|)$. Observe that on
$\mathcal{N}(p')$, $E(p) =|p^2-\mu|/\sqrt{1-4|\hat\alpha(p)|^2} \leq c
|p^2-\mu|$ for some constant $c$, and thus grows {\it linearly} in
$|p|$ close to $\sqrt\mu$. Hence one easily sees that the problem here
is equivalent to the existence of a negative eigenvalue of the
relativistic operator $|p| + V$ in {\it one} dimension. Using the
Birman-Schwinger principle, it is easy to see that the latter always
has a negative eigenvalue if $\int V < 0$.
\end{proof}

\iffalse
 us
introduce coordinates according to the level sets $|p^2 - \mu| = t$,
such that $\hat g_n(s,t) = \hat \psi_n(s) \hat f_n(t)$ and
normalized. Observe $E(p) =|p^2-\mu|/\sqrt{1-4|\hat\alpha(p)|^2}
\leq c t$ for some constant $c$. Consequently the left side in
\eqref{trialkin} can be bounded by $c \int t |\hat f_n(t)|^2 dt$.
Recall that in one dimension the relativistic operator $|p| + V$ has
a negative eigenvalue if $\int V < 0$, which enables us to find an
appropriate $f_n$, respectively $\psi_n$, such that \eqref{trialkin}
and \eqref{trialpot} hold true.

 This gives a contradiction to
\eqref{condminim}.
\end{proof}
\fi

\subsection{Proof of Theorem \ref{gap}}

The energy gap of the system at zero temperature, $\Xi = \inf_p E(p)$, with
$$E(p)=|p^2-\mu|/\sqrt{1-4|\hat\alpha(p)|^2}=\sqrt{|p^2-\mu|^2+|\Delta(p)|^2},$$
depends on the behavior of $|\Delta(p)|$ on the Fermi sphere. The
function $\Delta$ is not unique, in general and need not be radial
even in case $V$ is radial.

Under the assumption that $\hat V$ is non-positive and $\hat V(0) <
0$, we shall argue in the following that the
minimizer of the BCS functional (\ref{deffa}) at $T=0$ is unique
\cite[Lemma 3]{HS}.  If, in addition, $V$ is radial, this necessarily
implies that also the minimizer has to be radial. Since $\hat V \leq
0$,
  \begin{equation}\label{sq}
    \int_{\R^6} \overline{\hat\alpha(p)} \hat V(p-q) \hat\alpha(q) \, dpdq \geq
\int_{\R^6} |\hat\alpha(p)| \hat V(p-q)| \hat\alpha(q)|\, dpdq \,.
  \end{equation}
  Hence, if $\hat\alpha(p)$ is a minimizer of $\F_0$, \eqref{deffa}, so is $|\hat \alpha(p)|$.

  Assume now there are two different minimizers $f\neq
  g$, both with nonnegative Fourier transform.  Since $t \to 1 -
  \sqrt{1 - 4 t}$ is strictly convex for $0\leq t \leq 1/2$ we
  see that $\psi = \frac 1{\sqrt{2}} f + i \frac 1{\sqrt{2}} g$,
  satisfies
$$ \F_0 (\psi)  < \half \F_0(f) + \half \F_0(g)\,.$$
This is a contradiction to $f,g$ being distinct minimizers, and hence
$f=g$. In particular, the absolute value of a minimizer has to be
unique. If $\hat\alpha$ is the unique non-negative minimizer, then one
easily sees from the BCS equation (using $\int V<0$) that $\hat
\alpha$ is, in fact, strictly positive. Hence {\it any} minimizer is
non-vanishing. But (\ref{sq}) is {\it strict} for non-vanishing
functions, unless $\hat\alpha(p)=|\hat\alpha(p)| e^{i\kappa}$ for
some constant $\kappa\in\R$.

To summarize, we have just argued that for $\hat V \leq 0$, $\hat V(0)
< 0$ and $V$ radial, the solution of the BCS equation is unique, up to
a constant phase, and it is radially symmetric. This will enable us to
apply the same methods as we used for the critical temperature $T_c$
in order to derive the asymptotic behavior of $\Xi$.

The variational equation \eqref{bcset} for the minimizer
of $\F_0$ can be rewritten in terms of $\alpha$ as
\begin{equation}\label{evea}
\left(E(-i\nabla) + \lambda V(x) \right)\alpha(x) = 0.\,
\end{equation}
 That is, $\alpha$ is an eigenfunction of the
pseudodifferential operator $E(-i\nabla)+\lambda V(x)$, with zero
eigenvalue.  Since $\hat V \leq 0$ and $\hat \alpha(p)$ is
non-negative we can even conclude that $\alpha$ has to be the ground
state.

Similarly to the proof of Theorem~\ref{constant}, we can now employ
the Birman-Schwinger principle to conclude from \eqref{evea} that
$\phi_\lambda = V^{1/2} \alpha$ satisfies the eigenvalue equation
\begin{equation}\label{bse}
 \lambda V^{1/2}  \frac1
  {\sqrt{(p^2 - \mu)^2 + |\Delta(p)|^2}}|V|^{1/2} \phi_\lambda = - \phi_\lambda\,.
\end{equation}
Moreover, there are no eigenvalues smaller than
$-1$ of the operator on the left side of (\ref{bse}).

Let
\begin{equation}\label{defmtd}
  \widetilde m_\mu(\Delta) = \max\left\{ \frac 1{4\pi\mu} \int_{\R^3}\left(
\frac 1 {\sqrt{(p^2-\mu)^2 +
        |\Delta(p)|^2}} -\frac 1{p^2}\right) dp \, , \, 0\right\} \,.
\end{equation}
Similarly to (\ref{defmt}), we split the operator in (\ref{bse}) as
$$
V^{1/2} \frac 1{E(-i\nabla)} |V|^{1/2} = \widetilde m_\mu(\Delta) V^{1/2} \dig^*\dig
|V|^{1/2} + V^{1/2}M_\Delta |V|^{1/2}\,.
$$
Again one shows that $V^{1/2}M_\Delta |V|^{1/2}$ is bounded in
Hilbert-Schmidt norm, independently of $\Delta$. Moreover, as in the
proof of Theorem~\ref{constant} (cf.~Eqs.~(\ref{1ba})--(\ref{b1})),
the fact that the lowest eigenvalue of $\lambda V^{1/2}
E(-i\nabla)^{-1} |V|^{1/2}$ is $-1$ is, for small enough $\lambda$,
equivalent to the fact that the selfadjoint operator on
$L^2(\Omega_\mu)$
\begin{equation}\label{mbd}
 \dig |V|^{1/2} \frac {  \lambda
  \widetilde m_\mu(\Delta)}{1 + \lambda V^{1/2}
    M_\Delta |V|^{1/2} }V^{1/2} \dig^*
\end{equation}
has $-1$ as its smallest eigenvalue. This implies that
$\lim_{\lambda\to 0} \lambda \widetilde m_\mu(\Delta) =
-1/(\sqrt\mu\,e_\mu)$ and hence, in particular, $\widetilde
m_\mu(\Delta) \sim \lambda^{-1}$ as $\lambda \to 0$. The unique
eigenfunction corresponding to the lowest eigenvalue $e_\mu<0$ of
$\V_\mu$ is, in fact, a positive function, and because of radial
symmetry of $V$ it is actually the constant function
$u(p)=(4\pi\mu)^{-1/2}$.

We now give a precise characterization of $\Delta(p)$ for small
$\lambda$.

\begin{lemma}\label{Deltach}
  Let $V \in L^1\cap L^{3/2}$ be radial, with $\hat V \leq 0$ and
  $\hat V(0) < 0$, and let $\Delta$ be given in (\ref{bcset}), with
  $\alpha$ the unique minimizer of the BCS functional
  (\ref{deffa}). Then
  \begin{equation}
    \Delta(p) = - f(\lambda) \left( \int_{\Omega_\mu} \hat V(p-q)  \,
      d\omega(q)  + \lambda \eta_\lambda(p)\right)
  \end{equation}
  for some positive function $f(\lambda)$, with
  $\| \eta_\lambda\|_{L^\infty(\R^3)}$ bounded independently of $\lambda$.
\end{lemma}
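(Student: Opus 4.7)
The plan is to combine the BCS equation \eqref{bcset} with the splitting $E(-i\nabla)^{-1} = \widetilde m_\mu(\Delta)\,\dig^*\dig + M_\Delta$ already used above, and to exploit the radial symmetry of the (unique) minimizer. Under the hypothesis $\hat V\leq 0$ and $\hat V(0)<0$, the preceding paragraphs establish that $\Delta$ is radial and strictly positive, so its restriction to $\Omega_\mu$ is a nonzero constant $\Delta_0>0$. Rewriting \eqref{bcset} in the form $\widehat{V\alpha}(p) = -\Delta(p)/(2\lambda)$ then shows that $\dig V\alpha$ equals the constant $-\Delta_0/(2\lambda)$ on $\Omega_\mu$.

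Taking the Fourier transform of $V\alpha = -\lambda V(\widetilde m_\mu(\Delta)\dig^*\dig + M_\Delta)V\alpha$, a short computation using the definition of $\dig^*$ gives
\[
 \widehat{V\dig^*\dig V\alpha}(p) = \frac{1}{(2\pi)^{3/2}}\int_{\Omega_\mu}\hat V(p-q)\widehat{V\alpha}(q)\,d\omega(q) = -\frac{\Delta_0}{2\lambda(2\pi)^{3/2}}\int_{\Omega_\mu}\hat V(p-q)\,d\omega(q),
\]
where the constancy of $\widehat{V\alpha}$ on $\Omega_\mu$ was used. Substituting into $\Delta(p) = -2\lambda\widehat{V\alpha}(p)$ produces
\[
 \Delta(p) = -\frac{\lambda\,\widetilde m_\mu(\Delta)\,\Delta_0}{(2\pi)^{3/2}}\int_{\Omega_\mu}\hat V(p-q)\,d\omega(q) + 2\lambda^2\,\widehat{VM_\Delta V\alpha}(p),
\]
which is of the stated form with $f(\lambda) = \lambda\,\widetilde m_\mu(\Delta)\Delta_0/(2\pi)^{3/2} > 0$ (using $\Delta_0>0$ and $\widetilde m_\mu(\Delta)>0$ for small $\lambda$) and $\eta_\lambda(p) = -2(2\pi)^{3/2}\widehat{VM_\Delta V\alpha}(p)/(\widetilde m_\mu(\Delta)\Delta_0)$.

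The main remaining task, and the principal obstacle, is the uniform-in-$\lambda$ bound $\|\eta_\lambda\|_\infty \leq C$. Two applications of H\"older together with the uniform Hilbert--Schmidt bound on $V^{1/2}M_\Delta|V|^{1/2}$ (available from the earlier analysis) yield $\|\widehat{VM_\Delta V\alpha}\|_\infty \leq C\,\|V\|_{L^1}^{1/2}\,\|\phi_\lambda\|_{L^2}$, where $\phi_\lambda = V^{1/2}\alpha$, so it suffices to prove $\|\phi_\lambda\|_{L^2}\leq C\,\widetilde m_\mu(\Delta)\Delta_0$. This is achieved by applying the same splitting to the Birman--Schwinger equation \eqref{bse}, which gives
\[
 \phi_\lambda = \tfrac12\,\widetilde m_\mu(\Delta)\Delta_0\,V^{1/2}\dig^*\mathbf 1 - \lambda V^{1/2}M_\Delta V\alpha,
\]
where $\mathbf 1$ denotes the constant function on $\Omega_\mu$. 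Since $\dig^*\mathbf 1$ is bounded and $V\in L^1$, the first summand has $L^2$-norm of order $\widetilde m_\mu(\Delta)\Delta_0$, while the second is bounded by $\lambda C\,\|\phi_\lambda\|_{L^2}$; for small $\lambda$ one absorbs this term to conclude. What makes the whole argument work is precisely the radial symmetry, which reduces the values of $\Delta$ and $\widehat{V\alpha}$ on $\Omega_\mu$ to single constants, together with the identification of the ground state of $\V_\mu$ as the constant function, both consequences of $\hat V \leq 0$ and $\hat V(0)<0$.
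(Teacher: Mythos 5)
Your argument is correct, and it uses the same essential machinery as the paper: the decomposition $E(-i\nabla)^{-1}=\widetilde m_\mu(\Delta)\,\dig^*\dig+M_\Delta$, the uniform Hilbert--Schmidt bound on $V^{1/2}M_\Delta|V|^{1/2}$, and the final $\|\widehat{\,\cdot\,}\|_\infty\leq(2\pi)^{-3/2}\|\cdot\|_1$ plus Cauchy--Schwarz step to control the remainder. The one genuine difference is how the leading term is identified as (a multiple of) $\int_{\Omega_\mu}\hat V(p-q)\,d\omega(q)$: the paper works on $L^2(\Omega_\mu)$, observes that $\dig|V|^{1/2}\phi_\lambda$ must be the ground state of the Birman--Schwinger operator (\ref{mbd}), argues that for small $\lambda$ this ground state is the constant function $u$ (perturbation of the non-degenerate ground state of $\dig V\dig^*$), and then expands $(1+\lambda V^{1/2}M_\Delta|V|^{1/2})^{-1}$ to get $\phi_\lambda=f(\lambda)(V^{1/2}\dig^*u+\lambda\xi_\lambda)$; you instead read off the constancy of $\widehat{V\alpha}$ on $\Omega_\mu$ directly from the radial symmetry and strict positivity of $\Delta$ (both established in the preceding paragraphs), which bypasses the spectral identification entirely, and your absorption argument for $\|\phi_\lambda\|_2\leq C\,\widetilde m_\mu(\Delta)\Delta_0$ at small $\lambda$ is exactly the Neumann-series step the paper performs implicitly. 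Your route is slightly more elementary and self-contained at this point of the argument, at the cost of leaning on the positivity/uniqueness discussion; the paper's route has the advantage that the same Birman--Schwinger bookkeeping is reused verbatim from the critical-temperature proof and would survive in situations where one only knows $\phi_\lambda$ is a ground-state eigenfunction rather than knowing $\Delta$ is constant on the Fermi sphere a priori. Both yield the same $f(\lambda)$ up to normalization and the same uniform bound on $\eta_\lambda$.
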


\begin{proof}%[Proof of Lemma~\ref{Deltach}]
  Because of (\ref{bse}), $\dig |V|^{1/2}\phi_\lambda$ is the
  eigenfunction of (\ref{mbd}) corresponding to the lowest eigenvalue
  $-1$. Note that because of radial symmetry, the constant function
  $u(p)=(4\pi\mu)^{-1/2}$ is an eigenfunction of (\ref{mbd}). For
  small enough $\lambda$ it has to be eigenfunction corresponding to
  the lowest eigenvalue (since it is the unique ground state of the
  compact operator $\dig V \dig^*$).  We conclude
  that
  \begin{equation}\label{combw}
    \phi_\lambda =  f(\lambda) \frac 1{1+\lambda V^{1/2} M_\Delta
      |V|^{1/2}} V^{1/2} \dig^* u  = f(\lambda) \left(
      V^{1/2}\dig^* u + \lambda \xi_\lambda\right)
  \end{equation}
  for some normalization constant $f(\lambda)$. Note that
  $\|\xi_\lambda\|_2$ uniformly bounded for small $\lambda$, since
  both $V^{1/2}M_\Delta |V|^{1/2}$ and $V^{1/2}\dig^*$ are bounded
  operators.

From  (\ref{evea}) and the definition $\phi_\lambda = V^{1/2}\alpha$
we know that
$$
\Delta(p) = 2 E(p) \hat\alpha(p) = - 2 \lambda \widehat {V
  \alpha}(p) = -2 \lambda \widehat{|V|^{1/2} \phi_\lambda}(p)\,.
$$
In combination with (\ref{combw}) this implies  that
$$
\Delta(p) = - 2\lambda f(\lambda) \left( \widehat{V \dig^* u}(p) + \lambda
  \widehat{\eta_\lambda}(p)\right)\,,
$$
with $\eta_\lambda = |V|^{1/2} \xi_\lambda$. With
$\|\widehat{\eta_\lambda}\|_\infty\leq (2\pi)^{-3/2}
\|\eta_\lambda\|_1 \leq (2\pi)^{-3/2} \|V\|_1 \|\xi_\lambda\|_2$ by
Schwarz's inequality, we arrive at the statement of the Lemma.
\end{proof}

With the aid of Lemma~\ref{Deltach} and Lipschitz continuity of $
\int_{\Omega_\mu} \hat V(p-q) \, d\omega(q)$ (which follows from $V\in
L^1(\R^3)$) it is not difficult to see that
\begin{equation}\label{calcmtp}
  \widetilde m_\mu(\Delta) = \frac 1{\sqrt\mu}\left( \ln \frac \mu {\Delta(\sqrt\mu)} -2 +
    \ln 8 + o(1)\right)
\end{equation}
as $\lambda\to 0$. From Eq.~(\ref{mbd}) we now conclude that
\begin{equation}\label{mbd2}
  \widetilde m_\mu(\Delta)= \frac 1{ \lambda \langle u| \dig V \dig^*|  u\rangle
 - \lambda^2\langle u|  \dig V M_\Delta V  \dig^*| u\rangle
   +O(\lambda^3)}\,,
\end{equation}
where $u(p)=(4\pi\mu)^{-1/2}$ is the normalized constant function on the sphere $\Omega_\mu$.
Moreover, with $\varphi = V \dig^* u$,
\begin{multline}\nonumber
  \langle u| \dig V M_\Delta V \dig^*| u\rangle  = \int_{\R^3} \frac 1{E(p)}
  |\hat\varphi(p)|^2 \, dp - \widetilde m_\mu(\Delta)
  \int_{\Omega_\mu} |\hat\varphi(\sqrt\mu p/|p|)|^2 \, d\omega(p) \\
= \int_{\R^3} \left(\frac 1{E(p)} \left[ |\hat\varphi(p)|^2
      -|\hat\varphi(\sqrt\mu p/|p|)|^2 \right] + \frac 1{p^2}
    |\hat\varphi(\sqrt\mu p/|p|)|^2 \right) dp \,.
\end{multline}
Using Lemma~\ref{Deltach} and the fact that $\lim_{\lambda \to 0}
f(\lambda)=0$, we conclude that
\begin{equation}
 \lim_{\lambda\to 0} \langle u|  \dig V M_\Delta V \dig^* | u\rangle  =  \langle u| \W_\mu| u\rangle \,,
\end{equation}
with $\W_\mu$ defined in (\ref{defW}). (Compare with
Eqs.~(\ref{comeq}) and~(\ref{deno2}).)  In combination with
(\ref{calcmtp}) and (\ref{mbd2}) and the definition of $\B_\mu$ in
(\ref{defB}), this proves that
$$
\lim_{\lambda \to 0} \left(\ln\left(\frac\mu {\Delta(\sqrt\mu)}
  \right) + \frac {\pi}{2 \sqrt{\mu}\, \langle u| \B_\mu|u\rangle }\right) = 2 -
\ln(8)\,.
$$
The same holds true with $\langle u|\B_\mu|u\rangle$ replaced by
$b_\mu(\lambda) = \infspec \B_\mu$, since under our assumptions
on $V$ the two quantities differ only by terms of order $\lambda^3$.

Now, by the definition of the energy gap $\Xi$ in (\ref{defxi}), $\Xi\leq
\Delta(\sqrt\mu)$. Moreover,
$$
\Xi \geq \min_{|p^2-\mu|\leq \Xi} |\Delta(p)|\,,
$$
from which it easily follows that $\Xi \geq \Delta(\sqrt\mu)( 1-
o(1))$, using Lemma~\ref{Deltach}. This proves Theorem~\ref{gap}.

\section*{Acknowledgments}
R.S. gratefully acknowledges partial support by U.S. National Science Foundation
grant PHY 0652356 and by an A.P. Sloan Fellowship.

%%%%%%%%%%%%%%%%%%%%%%%%%%%%%%%%%%%%%%%%%%%%%%%%%%%%%%%%%%%%%%%%%%%%%%%%%%%%%%%%%%%%

\end{document}